\title{On the Hardness of Order Finding and Equivalence Testing for ROABPs}
\author{
    C. Ramya\\
    \small The Institute of Mathematical Sciences, Chennai, India\\
    \small\texttt{ramyac@imsc.res.in}
    \and
    Pratik Shastri\\
    \small The Institute of Mathematical Sciences, Chennai, India\\
    \small\texttt{pratiks@imsc.res.in}
}
\date{} % Suppress the date to appear on the title page
\newcommand{\F}{\mathbb{F}}
\newcommand{\GL}{\mathrm{GL}}
\newcommand{\cutwidth}{\mathrm{cutwidth}}
\newcommand{\cut}{\mathrm{cut}}
\newcommand{\rank}{\mathrm{Rank}}
\newcommand{\bfx}{\mathbf{x}}
\newcommand{\bfb}{\mathbf{b}}
\newcommand{\cutrank}{\mathrm{cut\mbox{-}rank}}
\newcommand{\rowidth}{\mathrm{RO\mbox{-}width}}
\newtheorem{theorem}{Theorem}[section]
\newtheorem{lemma}[theorem]{Lemma}
\newtheorem{claim}[theorem]{Claim}
\newtheorem{definition}[theorem]{Definition}
\newtheorem{problem}[theorem]{Problem}
\begin{document}

\maketitle

\begin{abstract}
The complexity of representing a polynomial by a Read-Once Oblivious Algebraic Branching Program (ROABP) is highly dependent on the chosen variable ordering. Bhargava et al. \cite{BDGT24} prove that finding the optimal ordering is NP-hard, and provide some evidence (based on the Small Set Expansion hypothesis) that it is also hard to approximate the optimal ROABP width. In another work, Baraskar et al. \cite{BDSS24} show that it is NP-hard to test whether a polynomial is in the $\GL_n$ orbit of a polynomial of sparsity at most $s$. Building upon these works, we show the following results: first, we prove that approximating the minimum ROABP width up to any constant factor is NP-hard, when the input is presented as a circuit. This removes the reliance on stronger conjectures in the previous work \cite{BDGT24}. Second, we show that testing if an input polynomial given in the sparse representation is in the affine $\GL_n$ orbit of a width-$w$ ROABP is NP-hard. Furthermore, we show that over fields of characteristic $0$, the problem is NP-hard even when the input polynomial is homogeneous. This provides the first NP-hardness results for membership testing for a dense subclass of polynomial sized algebraic branching programs (VBP). Finally, we locate the source of hardness for the order finding problem at the lowest possible non-trivial degree, proving that the problem is NP-hard even for quadratic forms.
\end{abstract}

% --- KEYWORDS (Optional) ---
\noindent
\textbf{Keywords:} ROABP, Order Finding, Equivalence Testing, NP-hardness, Approximation Hardness

% --- DOCUMENT BODY (Assuming these files exist in the same directory) ---
% Note: You may need to add \section{...} commands inside these files
% if they don't already have them, as numbering is now section-based.
\section{Introduction}

{\em Algebraic circuits} provide a powerful framework for understanding the complexity of computing multivariate polynomial over a field.  These are directed acyclic graphs whose in-degree 0 vertices are labeled by variables or field constants and internal vertices are either addition or multiplication gates. Several computational problems concerning polynomials such as identity testing, polynomial factoring, equivalence testing and reconstruction have been studied intensively for various structured subclasses of circuits.

In this article, we are interested in {\em Read-Once Oblivious Algebraic Branching Programs} (ROABPs for short), the algebraic analogs of ordered binary decision diagrams (OBDDs). Informally, an ROABP is layered DAG with a designated source and  sink vertex whose edges are labeled by univariate polynomials. More importantly, all edges in the same layer use the same variable and a variable occurs in exactly one layer. The polynomial computed by the ROABP is the sum over all source-to-sink paths of the product of edge weights along each path. In this work, we study the computational complexity of two problems concerning ROABPs: {\em order-finding} and {\em equivalence testing}. Before delving into the details of these two problems, we note that in algorithms concerning polynomials, a polynomial $f(x_1,\ldots ,x_n)$ of degree $d$ can be given as an input to an algorithm in one of the following standard representations: 
\begin{itemize}
    \item Dense Representation: a list (of size $\binom{n+d}{d}$) of coefficients for all possible monomials up to degree $d$. 
    \item Sparse Representation: a list of pairs, where each pair consists of a non-zero coefficient and its corresponding monomial (represented by its exponent vector).
    \item Circuit Representation: an algebraic circuit computing $f$.
    \item Black-Box Access: an oracle access to evaluations of $f$. 
\end{itemize}
It is important to note that the choice of representation can critically affect the complexity of a computational problem. The input representations above are ordered in increasing order of compactness. As a rule, problems considered in this paper become harder when the input is presented more compactly. Conversely, proving hardness becomes easier.

%A central goal in this field is to find the most "succinct" or "simple" representation for a given polynomial.  

First, we begin with the order finding problem for ROABPs. It is easy to see that  with every ROABP computing a polynomial $f(x_1,\ldots ,x_n)$ we can associate a unique permutation $\sigma :[n] \rightarrow [n]$ which we call the {\em order} of the ROABP. The size of an ROABP, particularly its \emph{width}, which is the maximum number of vertices in any of its layers serves as a key measure of complexity. A crucial feature of ROABPs is that the width required to compute a polynomial $f$ is critically dependent on the order. A poor choice of ordering can lead to an exponential blow-up in the required width compared to an optimal one. For instance, the minimal widths of ROABPs computing $(x_1+y_1)(x_2+y_2)\cdots (x_n+y_n)$ in orders $(x_1,y_1,x_2,y_2,\ldots ,x_n,y_n)$ and $(x_1,\ldots ,x_n,y_1,\ldots ,y_n)$ are drastically different. This sensitivity gives rise to the natural computational question of order finding introduced in Bhargava et al.\ \cite{BDGT24}:

\begin{problem}[$\sf CktROWidth\mbox{-}d$] 
 [Decision version of order-finding]: Given a  polynomial $f\in\mathbb{F}[x_1, \ldots, x_n]$ of degree at most $d$ as an algebraic circuit and an integer $w$ in binary, decide whether there exists an ROABP for $f$ of width at most $w$, in some order.
\end{problem}

\begin{problem}[$\sf SearchCktROWidth\mbox{-}d$]
 [Search version of order-finding]: Given an algebraic circuit $C$ computing an $n$-variate polynomial $f$ of degree $\leq d$, find an order $\sigma \in S_n$ that minimizes the ROABP width for $f$.

\end{problem}

Depending on the input representation, other variations of these problems are defined analogously. For instance, DenseROWidth-d is the decision version of the problem when the input is provided in the dense representation.

Bhargava, Dutta, Ghosh, and Tengse \cite{BDGT24} study the complexity of the order-finding problem. In particular, they show that DenseROWidth-6 is $\NP$-hard. Their proof is an interesting reduction from the {\em cutwidth} problem for graphs. This is a {\em linear arrangement problem} in which given a graph the goal is to find an ordering (a.k.a linear arrangement) of the vertices  that minimizes the maximum number of edges between any prefix and the corresponding suffix in that ordering. It is known that the cutwidth problem is $\NP$ hard even for graphs with maximum degree $3$. Their reduction is parameter preserving: The graph is transformed into a polynomial in such a way that ROABP width of the polynomial is $2$ more than the cutwidth of the graph. The degree of the polynomial constructed is exactly $2\Delta$, where $\Delta$ denotes the maximum degree of the graph.

Furthermore,  Bhargava et al.\ \cite{BDGT24} also study the problem of approximating ROABP width. They provide two pieces of evidence that indicate hardness of approximation, both stemming from their reduction. First, as noted earlier, the reduction in \cite{BDGT24} is parameter preserving. Hence, hardness of approximation for the cutwidth problem directly translates to hardness of approximation for ROABP width. Austrin et al.\ \cite{APW12} showed that cutwidth is hard to approximate assuming the {\em small set expansion} (SSE) hypothesis thus implying hardness of approximating ROABP width assuming the SSE hypothesis, even when the input is given in the dense representation. Second, they develop a {\em tensoring technique} through which they show that a constant factor approximation for Search-CktROWidth would imply a PTAS for Search-CktROWidth, and therefore, by their reduction, for cutwidth. This is (potentially stronger) evidence against the existence of a constant factor approximation for Search-CktROWidth, since Ambühl et al.\ \cite{AMS11} show that for a problem related to cutwidth, a PTAS does not exist under assumptions weaker than SSE (but stronger than $\P\neq \NP$). 

In this work, we prove $\NP$-Hardness of approximating Search-CktROWidth up to any arbitrary constant factor.

\begin{theorem}[Inapproximability of Search-CktROWidth-d]
\label{thm:inapproximability}
    Let $\alpha\in \mathbb{R}$ be an arbitrary constant. Let $n,d\in \mathbb{N}$ be given as input, in unary. Let $f\in \mathbb{F}[x_1, \ldots, x_n]$ of degree $d$ be given as input, as an arithmetic circuit. Let $w\in\mathbb{N}$ also be give as input, in binary. It is $\NP$-hard to distinguish between the following two cases:
    \begin{enumerate}
        \item $f$ has an ROABP of width $\leq w$ in some order.
        \item Every ROABP for $f$ has width $>\alpha\cdot w$.
    \end{enumerate}
\end{theorem}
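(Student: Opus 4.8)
The plan is to reduce from the cutwidth problem on bounded-degree graphs, which is $\NP$-hard to decide exactly with no appeal to the SSE hypothesis, and to convert the trivial unit gap in cutwidth into a constant-factor \emph{multiplicative} gap in ROABP width. The key idea is to replace the parameter-preserving reduction of \cite{BDGT24}, in which the width is cutwidth plus a constant (an additive relationship), by a reduction in which the minimal width is \emph{exponential} in the cutwidth, with a tunable base. Concretely, from a max-degree-$3$ graph $G=(V,E)$ and an integer parameter $b$ I would construct a polynomial $F_{G,b}$, computed by a circuit of size $O(|E|)$ and degree $O(|E|)$, whose minimal ROABP width over all orders equals $b^{\cutwidth(G)}$. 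Since deciding whether $\cutwidth(G)\le c$ is $\NP$-hard, a graph with $\cutwidth(G)\le c$ yields a polynomial with an ROABP of width $\le b^{c}$, whereas $\cutwidth(G)\ge c+1$ forces every ROABP to have width $\ge b^{c+1}=b\cdot b^{c}$. Setting $w:=b^{c}$ (which has only $c\log_2 b = O(|V|)$ bits and is thus polynomial in the input size) and choosing the constant base $b:=\lceil\alpha\rceil+1$ immediately separates width $\le w$ from width $>\alpha\cdot w$, giving \Cref{thm:inapproximability}.

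The heart of the construction is a gadget realizing a rank-$b$ ``bond'' on each edge, so that for any bipartition of the variable set induced by an ordering, the coefficient matrix at that cut has rank exactly $b^{m}$, where $m$ is the number of edges of $G$ crossing the cut. Recall that the minimal width of an ROABP for $F$ in an order $\sigma$ is $\max_{\text{prefix}}\rank(M_\sigma)$, the maximum over all prefix/suffix splits of the rank of the coefficient matrix $M_\sigma$. If each edge contributes an independent rank-$b$ factor across a cut it crosses and a rank-$1$ factor otherwise, then the width in order $\sigma$ is $b^{\cut_\sigma(G)}$ and the minimum over orders is $b^{\cutwidth(G)}$. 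I would therefore attach to each edge $e=(u,v)$ a bilinear form $\sum_{\ell=1}^{b}(\text{a variable at }u)\cdot(\text{a variable at }v)$ and take $F_{G,b}$ to be the product of these forms over all edges, so that a crossing edge contributes a $b\times b$ identity block to the cut matrix while a non-crossing edge contributes a scalar.

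The main obstacle, and the step requiring the most care, is proving that the cut-rank is exactly $b^{m}$ rather than smaller, and here the two directions behave asymmetrically. The upper bound $\rank(M_\sigma)\le b^{\cut_\sigma(G)}$, needed for the YES case where a good linear arrangement must yield small width, is the easier one: at any cut $F_{G,b}$ factors as $A(P)\cdot B(S)\cdot C(P,S)$, where $A,B$ depend only on prefix and suffix variables and $C$ is the product of the crossing gadgets, so the rank is governed by $C$. The delicate direction is the lower bound $\rank(M_\sigma)\ge b^{\cut_\sigma(G)}$, needed for the NO case where \emph{every} order must be expensive. The difficulty is a genuine tension created by shared vertices: a single vertex may lie on up to three edges, and naive sharing of its variables across incident gadgets symmetrizes the crossing monomials and collapses the rank below $b^{m}$, whereas giving each edge-endpoint entirely private variables decouples the edges so that some order processes them one at a time and the width drops to the \emph{maximum} bond dimension rather than their product.

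Resolving this tension is where the real work lies: the edge gadgets must be coupled tightly enough that no ordering can cheaply decouple a vertex's incident edges, yet the crossing coefficient matrix must remain a genuine Kronecker product of full-rank blocks. I expect to handle it by assigning each edge-endpoint its own variables, so the $b^{m}$ crossing monomials stay formally distinct and no symmetrization occurs, and then showing by a direct genericity (Vandermonde-style) argument on the vertex gadgets that these $b^{m}$ monomials remain linearly independent once the product is expanded, so the tensor-network bond dimension is achieved and no rank is lost at shared vertices. Simultaneously I must verify that this coupling prevents the decoupling order from ever reducing the width to the maximum; establishing both halves of this double requirement is the principal technical content of the proof, after which the amplification and the polynomial-size bookkeeping ($O(|E|)$ gates, degree $O(|E|)$, and $O(|V|)$-bit threshold $w$) follow routinely.
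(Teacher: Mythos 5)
Your high-level skeleton matches the paper's: reduce from cutwidth on maximum-degree-$3$ graphs, take a product of per-edge ``bond'' gadgets so that the Nisan-matrix rank at a cut is exponential in the number of crossing edges, and thereby turn the unit additive gap of cutwidth into a factor-$\alpha$ multiplicative gap in width. (The paper implements the amplification differently: its gadget is the base-$2$ bond $\prod_{\{i,j\}\in E}\bigl(1+x_i^{n_i(j)}x_j^{n_j(i)}\bigr)$, and the tunable base is obtained afterwards via the tensoring lemma of \cite{BDGT24}, i.e., passing to $f^{\otimes l}$ with $l=\lceil\log\alpha\rceil+1$, rather than by building a rank-$b$ bilinear bond directly.)

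However, there is a genuine gap exactly at the point you yourself flag as ``the principal technical content,'' and your proposed fix makes the problem worse, not better. If every edge-endpoint gets private variables, then $F_{G,b}=\prod_{e\in E}g_e$ where the $g_e$ live on pairwise disjoint variable sets. An ROABP may then order the variables edge by edge: in that order every prefix splits at most one gadget $g_e$, so every Nisan matrix has rank at most $b$, and hence $\rowidth(F_{G,b})\le b$ for \emph{every} graph $G$ --- the NO case is destroyed. No Vandermonde or genericity choice of coefficients can prevent this, because the collapse follows from variable-disjointness alone, not from any coefficient degeneracy; so the two halves of your ``double requirement'' (formal distinctness via private variables, and coupling that defeats the decoupling order) are mutually exclusive as stated. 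The missing idea --- which is the paper's --- is to keep exactly \emph{one} variable $x_i$ per vertex, so that ROABP variable orders are precisely vertex arrangements and no decoupling order exists, and to separate the edges incident to $i$ not with fresh variables but with distinct \emph{exponents} of $x_i$: edge $\{i,j\}$ contributes $x_i^{n_i(j)}x_j^{n_j(i)}$, which is feasible with small exponents because the degree is bounded. Then different subsets of picked edges yield different row/column monomials, the Nisan matrix at a cut $S$ becomes block-diagonal with $2^{|\cut(S)|}$ rank-one blocks, giving rank exactly $2^{|\cut(S)|}$, and the constant-factor gap follows by tensoring. Without this coupling mechanism (or an equivalent one), your construction cannot establish the width lower bound in the NO case, so the proof as proposed is incomplete.
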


Our reduction is natural, direct, and again, from cutwidth. It works over any field. We use the power of the circuit representation to construct, from a graph $G$, a depth three circuit for a polynomial $f_G$ such that for every subset $S$ of vertices,  rank of the Nisan matrix of $f_G$ with respect to the set $S$ is exactly $2^{|\cut(S)|}$. For a subset of vertices $S$, $\cut(S)$ denotes the number of edges going out of $S$. See Section \ref{sec:prelims} for definition of Nisan matrices. This already gives NP-hardness of $2$-approximation. Then, we use the tensoring technique from \cite{BDGT24} to get hardness of $\alpha$-approximation for any constant $\alpha$. 

\begin{comment}
In the approximate version of the ROABP reconstruction problem, we are $n,d,\alpha\in\mathbb{N}$ in unary, $w\in\mathbb{N}$ in binary and we are provided with blackbox access to an $n$ variate, degree $d$ polynomial $f$ which is computable by a width $w$ ROABP. We are asked to reconstruct an ROABP for $f$, of width at most $\alpha w$. An immediate corollary of our result is that approximate reconstruction of ROABPs is also NP-hard, for any constant approximation factor $\alpha$.
\end{comment}

Next, we move on to the problem of testing equivalence to ROABPs. Polynomial equivalence testing is a well-studied problem in algebraic complexity theory: given two polynomials the goal is to decide if one is equivalent to the other via an invertible affine transformation of the variables. Several special cases of the polynomial equivalence problem have been studied time and again. In order to understand recent progress on this problem, we consider the notion of {\em orbits} of polynomial families. Let $\bfx$ denote $(x_1, \ldots, x_n)$. The {\em orbit} of an $n$-variate polynomial $f$ is the set of polynomials obtained from $f$ by applying an invertible affine transformation to the variables, i.e., $ {\sf orbit}(f) =  \{  f(A\mathbf{x} + \mathbf{b}) \mid A \in \GL_n(\F) \text{ and } \mathbf{b} \in \F^n \}$. For any class $C$ of polynomials, the orbit of $C$ is the union of orbits of polynomials in $C$. The orbits of the determinant and permanent polynomials are central to geometric complexity theory.

In the equivalence problem for a certain class $C$ of circuits we are given a polynomial $f$ (in some representation) and the goal is to determine if $f$ is in the ($\GL_n$) orbit of a certain circuit class $C$. In other words, decide if $f$ affine equivalent (under invertible transformations) to some polynomial in $C$.  Medini and Shpilka \cite{MS21} study the orbit of the continuant polynomial which is the trace of a certain product of matrices of dimension two and design  polynomial time reconstruction algorithms for the same.

In the context of testing whether a given polynomial promised to be in the orbit of a certain circuit class is identically zero or not, Medini and Shpilka \cite{MS21} construct hitting sets for orbits of read-once formulas\footnote{Read-once formulas are arithmetic formulas where every variable appears as a leaf at most once.} and certain {\em dense} subclasses of depth three circuits. Saha and Thankey \cite{ST24} designed hitting sets for orbits of ROABPs.  Recently, Bhargava and Ghosh \cite{BG22} obtained smaller hitting sets for the same class of polynomials. It is known \cite{ST24} that orbits of polynomial size ROABPs are a dense subclass of the the class of polynomial size \emph{general} algebraic branching programs.

Gupta et al.\ \cite{GST23} consider the equivalence problem for read-once arithmetic formulas and give a randomized polynomial-time algorithm (with oracle access to quadratic form equivalence) for the same over fields of characteristic zero. Baraskar et al.\ \cite{BDS24} give a randomized algorithm to test equivalence to {\em design polynomials}\footnote{Design polynomials are a special class of polynomials in which the degree of the GCD of every pair of monomials is bounded.} over fields of sufficiently large size and characteristic. 
In a recent, beautiful work, Baraskar et al.\ \cite{BDSS24} show $\NP$ hardness of testing equivalence to {\em sparse polynomials} over any field when the input polynomial is given in the sparse representation. The problem is as follows: given $f$ in the sparse representation and in integer $w$, does there exist an $A\in\GL_n$, $\mathbf{b}\in \mathbb{F}^n$ and a polynomial $g(\bfx)$ with at most $w$ non-zero monomials such that $f=g(A\bfx+\bfb)$? They also show that a related problem, that of deciding equivalence to constant support polynomials, is NP hard.

In this work, we consider the problem of testing equivalence to ROABPs. 

\begin{problem}[{\sf ROABP-Equivalence}]
\label{prob:orbit}
Given an $n$-variate polynomial $f\in \mathbb{F}[x_1 \ldots, x_n]$ in its sparse representation and an integer $w$ in binary, decide if there exists an $A\in GL_n(\mathbb{F})$ and a $\bfb\in\mathbb{F}^n$ such that $f(A\mathbf{x}+\mathbf{b})$ has an ROABP of width at most $w$.
\end{problem}

We show that this problem is $\NP$-hard over all fields. Over fields of characteristic $0$, it remains hard even when the input polynomial is homogeneous. To the best of our knowledge, this provides the first $\NP$-hardness result for membership testing for a dense subclass of polynomial size ABPs:

\begin{theorem}
Over fields of characteristic $0$, the {\sf ROABP-Equivalence} problem  is $\NP$-hard even when the input polynomial $f$ is homogeneous. Over fields of prime characteristic, the {\sf ROABP-Equivalence} problem is $\NP$-hard. 
\end{theorem}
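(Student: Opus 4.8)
The plan is to reduce from the cutwidth problem on bounded-degree graphs, as in \cite{BDGT24}, but to route the hardness through the $\GL_n$-orbit rather than through a fixed variable order. The natural invariant to track is $\mu(f) := \min_{A \in \GL_n(\F),\, \bfb \in \F^n} \operatorname{width}(f(A\bfx + \bfb))$, the minimum ROABP width attained over the whole orbit of $f$; by Nisan's characterisation this equals $\min_\sigma \max_{S} \rank(M_S)$, further minimised over all orbit representatives, where $M_S$ is the Nisan matrix across the prefix $S$. The product gadget behind \Cref{thm:inapproximability} realises Nisan rank exactly $2^{|\cut(S)|}$, but this is useless here: the input must be given sparsely, so its sparsity, and hence $\mu$, must stay polynomial. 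Accordingly I would replace the multiplicative gadget by an \emph{additive}, homogeneous one, attaching to the graph a sparse polynomial $f_G$ whose Nisan rank across a cut grows \emph{linearly} in $|\cut(S)|$, so that $\mu(f_G)$ equals $\cutwidth(G)$ up to an additive constant and the target width $w$ stays polynomial.

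Concretely, I would build $f_G$ from per-edge terms carrying fresh ``tag'' variables (so that $f_G$ is homogeneous and has only $|E|$ monomials), arranged so that in the vertex order realising the cutwidth there is an ROABP of width $\cutwidth(G)+O(1)$; this gives the easy direction, that a small cutwidth yields a small-width ROABP in the orbit. The entire difficulty is the converse, and this is the step I expect to be the main obstacle: I must show a \emph{rigidity} lower bound, namely that no affine change of coordinates can push the width below $\cutwidth(G)+O(1)$. This is exactly where low degree is dangerous---a quadratic form, for instance, is diagonalised by $\GL_n$ and so has orbit-minimum width $O(1)$ regardless of the graph---so the tags and the degree of the gadget must be chosen precisely to prevent such a collapse, and I would have to prove that for \emph{every} $A,\bfb$ some prefix of some order still exposes an identity-like Nisan block of size $\cutwidth(G)$. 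I would try to obtain this by isolating an orbit invariant of the tagged terms (a non-degeneracy of the associated tensor) that certifies the rank lower bound uniformly over the orbit.

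Finally, the homogeneity and the characteristic split would fall out of how the gadget is instantiated. Over characteristic $0$ the tagged gadget is homogeneous and the field constants separating the tags can be chosen so that the relevant Nisan blocks stay nonsingular; the rigidity proof then goes through verbatim, giving $\NP$-hardness even for homogeneous $f$. Over a field of prime characteristic these separating constants can coincide under the Frobenius map and collapse the blocks, so there I would fall back on a non-homogeneous variant of the same construction, which still yields $\NP$-hardness over every field and accounts for the weaker (non-homogeneous) statement in positive characteristic.
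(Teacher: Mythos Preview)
Your high-level outline---reduce from cutwidth, forward direction easy, reverse direction via a rigidity lower bound, characteristic-$0$ homogeneous versus characteristic-$p$ non-homogeneous---matches the paper's. But the crucial mechanism for the rigidity step is missing, and the one you sketch is not the paper's and is unlikely to work as stated.

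The paper does \emph{not} try to certify a uniform lower bound of $\cutwidth(G)$ across the entire orbit via a tensor invariant. Instead it arranges that every $A$ which is \emph{not} a permutation times a diagonal matrix makes the width much \emph{worse}, namely at least $|E|+3$; since $\cutwidth(G)\le|E|$, the orbit minimum is therefore attained at a trivial $A$, and one is back to Nisan's characterisation and the order-finding hardness of \cite{BDGT24}. Concretely, in characteristic $0$ the paper takes $f_G = \bigl(\prod_j x_j^{|E|+2}\bigr)\cdot g_G$ for a sparse edge polynomial $g_G$; then any $A$ that is not permutation-times-diagonal sends some $x_j$ to a linear form $\ell$ with support at least $2$, so $f_G(A\bfx+\bfb)$ is divisible by $\ell^{|E|+2}$, and a strengthening of a lemma from \cite{BDSS24} (Lemma~\ref{lem:rank} in the paper) shows that any polynomial divisible by $\ell^d$ with $\ell$ of support at least $2$ has ROABP width at least $d+1$ in \emph{every} order. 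In characteristic $p$ that lemma fails, and the paper instead adds distinct high-degree diagonal terms $x_j^{D_j}$ with the $D_j$ chosen via Lucas's theorem so that $\binom{D_j}{i}\not\equiv0\pmod p$ for enough $i$; again a non-trivial $A$ forces some $\ell^{D_j}$ with $\ell$ of support at least $2$ to dominate the top degree, producing the same width blow-up.

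Your plan, by contrast, introduces fresh tag variables and hopes to extract an orbit-invariant rank bound from ``non-degeneracy of the associated tensor.'' There are two problems. First, the invertible $A$ in the problem statement acts on \emph{all} variables, including your tags, so $A$ can mix tags with vertex variables and there is no a priori reason the tagged structure survives; you would need exactly the kind of ``non-trivial $A$ makes things worse'' argument the paper supplies, and your proposal gives no concrete mechanism for it. Second, asking for a Nisan block of size $\cutwidth(G)$ to persist uniformly across the orbit is strictly harder than what is needed and is not what the paper proves---the paper's lower bound for non-trivial $A$ is $|E|+3$, not $\cutwidth(G)+O(1)$, and this deliberate overshooting is precisely what makes the argument close.
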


Our proof combines ideas from the papers of Baraskar et al.\ \cite{BDSS24}  and Bhargava et al.\ \cite{BDGT24}. Specifically, we reduce from cutwidth and construct a polynomial $f$ such that the linear transformation which minimizes the ROABP width of $f(A\bfx + \bfb)$ is always the product of a diagonal matrix and a permutation matrix.

In our final result, we pinpoint the hardness of the original order-finding problem to the simplest non-trivial class of polynomials. We show that ROABP order finding is NP-hard even when the input is restricted to be a quadratic form (Theorem~\ref{thm:quadratic_hardness}). 

\begin{theorem}
    The problem \textrm{DenseROWidth-2} is NP-hard over all fields.
\end{theorem}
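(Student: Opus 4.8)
The plan is to characterize the minimal ROABP width of a quadratic form through ranks of Nisan matrices, and then to reduce the resulting graph-theoretic quantity from $\cutwidth$. Throughout, I take the reduction's target $f$ to be a homogeneous quadratic form containing no squares, so that $f = \sum_{\{i,j\}\in E} c_{ij}\, x_i x_j$ is naturally encoded by a (weighted) graph $G'=(V,E)$ on the variable set. First I would record the block structure of the Nisan matrix $N_S(f)$ for a subset $S\subseteq V$ of variables: its nonzero rows are indexed by $\{1\}$, by single variables $\{x_i : i\in S\}$, and by degree-two monomials internal to $S$ (and dually for the columns indexed by $\bar S$). Because $f$ has degree $2$, a degree-two monomial in $S$ can only pair with the empty monomial of $\bar S$, so all such rows occupy the single column indexed by $1$; symmetrically for $\bar S$. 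Hence $\rank(N_S(f)) = \rank(B_S) + \varepsilon_S$, where $B_S = \big(c_{ij}\big)_{i\in S,\, j\in\bar S}$ is the bilinear block — exactly the biadjacency matrix of the edges of $G'$ crossing the cut $(S,\bar S)$ — and $\varepsilon_S\in\{0,1,2\}$ records the presence of edges internal to $S$ and to $\bar S$. Minimising ROABP width over all orders then amounts, up to this additive $O(1)$ slack, to the cut-rank linear-arrangement quantity $\rowidth(G') := \min_{\pi}\max_{k} \cutrank_{\F}\big(\{\pi(1),\dots,\pi(k)\}\big)$.

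Next I would reduce $\cutwidth$ on bounded-degree graphs — which is $\NP$-hard — to $\rowidth$. The difficulty is that with one variable per vertex, $\cutrank$ equals cut size only when the crossing edges form an induced matching: a degree-$3$ vertex sending three edges across a cut contributes $3$ to $\cut$ but only $1$ to $\cutrank$. To repair this I would split each vertex into its incidences. For an edge $e=\{u,v\}$ I create an incidence group at $u$ and one at $v$, each blown up into a bundle of $N$ fresh variables, and I join the two bundles of $e$ by an identity matching (the $N$ monomials $y_{u,e}^{(k)} y_{v,e}^{(k)}$). A crossing edge then contributes a full-rank $N\times N$ identity block, so distinct crossing edges contribute independently. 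Crucially, every bilinear block $B_S$ arising this way is a direct sum of identity matrices and zero blocks, and such a matrix has the same rank over every field; this is precisely what delivers hardness over all fields, including characteristic $2$, since the argument reads monomial coefficients directly and never inverts $2$.

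The remaining ingredient is a forcing gadget guaranteeing that an optimal ordering keeps, for each original vertex $v$, all bundles of incidences at $v$ contiguous: I would link the bundles at a common vertex by further identity matchings of width $N$. Once everything is grouped, a prefix cut placed at a vertex boundary of the induced vertex-order has bilinear block equal to the direct sum, over edges $e$ crossing that vertex cut, of $N\times N$ identities, so its rank is exactly $N\cdot\cut(\cdot)$; taking the maximum over the order yields $N\cdot\cutwidth(G)$ plus a controlled lower-order term from cuts falling inside a vertex block or inside a bundle, bounded using the degree-$3$ assumption. I would then prove the converse: any ordering that separates two matched bundles pays at least $N$ in $\cutrank$, so once $N$ exceeds any attainable benefit the optimum must respect the grouping, and its value reads off $\cutwidth(G)$ exactly after dividing out $N$ and subtracting the known slack.

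I expect the forcing lemma to be the main obstacle. A single variable can always be peeled off at $\cutrank$ cost at most $1$, so the gadget cannot penalise small perturbations; the penalty must instead be charged to the accumulated rank of the large identity matchings, and I must rule out the possibility that cleverly interleaving the bundles of several (possibly non-adjacent) vertices lowers the maximum $\cutrank$ below $N\cdot\cutwidth(G)$. This requires an exchange argument that sorts an arbitrary ordering into a vertex-grouped one without increasing the maximum cut-rank, together with careful bookkeeping of the additive lower-order terms so that the decision thresholds for $\rowidth(G')$ and $\cutwidth(G)$ coincide exactly. Establishing this equivalence, and verifying that it is insensitive to the underlying field, completes the reduction and hence the $\NP$-hardness of $\textrm{DenseROWidth-2}$ over all fields.
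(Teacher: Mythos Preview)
Your block-structure analysis of the Nisan matrix is exactly right and matches the paper's: for a quadratic form, the degree-$1$/degree-$1$ block is the biadjacency matrix of the cut, so $\rank(M_S(f))$ equals $\cutrank_{\F}(S)$ plus an additive term coming from edges internal to $S$ and to $\bar S$. But from this point on you take a much harder road than the paper does.

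The paper observes that the quantity you write down, $\min_{\pi}\max_{k}\cutrank_{\F}(\{\pi(1),\dots,\pi(k)\})$, is precisely the \emph{linear rank-width} of the graph over $\F$, and that linear rank-width is already known to be $\NP$-hard over every field \cite{Oum17}. So the entire reduction is: given $(G,w)$, output the quadratic form $f_G=\sum_{\{i,j\}\in E}x_ix_j+\sum_i x_i^2$ and the threshold $w+2$. The square terms guarantee that both $S$ and $\bar S$ always contain a degree-$2$ monomial, pinning your $\varepsilon_S$ to exactly $2$ rather than letting it float in $\{0,1,2\}$; this makes $\rank(M_S(f_G))=\cutrank_{\F}(S)+2$ on the nose, and $\rowidth(f_G)=\text{linear rank-width}(G)+2$. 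No gadgets, no forcing lemma, no case analysis.

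By choosing square-free forms and then reducing from \textsc{CutWidth}, you are in effect re-proving the $\NP$-hardness of linear rank-width from scratch via an incidence-splitting gadget. That might be doable, but the forcing lemma you flag is a genuine obstacle, not a formality: the exchange argument has to work against arbitrary interleavings of bundles from many vertices simultaneously, and rank (unlike cut size) is not monotone under moving a variable across a boundary, so a simple swap argument does not obviously go through. Even if it can be made to work, it is a significant detour around a problem that is already in the literature. The practical fix is twofold: add the $\sum_i x_i^2$ terms to kill the $\varepsilon_S$ slack, and reduce from linear rank-width rather than cutwidth.
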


Previously, $\NP$-hardness was known for polynomials of degree $\geq6$ \cite{BDGT24}. To show this, we reduce from a different, more algebraic linear arrangement problem called linear rank-width (Problem \ref{problem:linrankwidth}).

\begin{comment}
\begin{enumerate}
    \item We show that the problem of testing whether a polynomial is in the orbit of a width $w$ ROABP is NP-hard, even for homogeneous polynomials (Theorem~\ref{thm:orbit_hardness}). \textcolor{red}{mention char 0 and p}

    \item We prove that for any constant $\alpha$, it is NP-hard to approximate the optimal ROABP-width within a factor of $\alpha$ for a polynomial given as an algebraic circuit. In particular, $\alpha$-approximation is NP-hard for any constant $\alpha$. This removes the reliance on stronger conjectures like the SSE hypothesis. (Theorem~\ref{thm:inapproximability}).

    \item We pinpoint the hardness of the original order-finding problem to the simplest non-trivial class of polynomials. We show it is NP-hard even when the input is restricted to be a quadratic form (Theorem~\ref{thm:quadratic_hardness}), a reduction from the degree-6 polynomials used previously.
\end{enumerate}
\end{comment}

\section{Preliminaries}
\label{sec:prelims}
We now formally define the concepts central to our results, including ROABPs, Nisan's characterization, and the graph theoretic computational problems we reduce from.

\subsection{ROABPs and Width Characterization}

\begin{definition}[Read-Once Oblivious ABP (ROABP)]
Let $\F$ be a field. An ROABP $R$ computing an $n$-variate polynomial $f(x_1, \dots, x_n)$ over $\mathbb{F}$ in a variable order $\sigma \in S_n$ is a layered, directed graph with $n+1$ layers, indexed $0$ to $n$.
\begin{itemize}
    \item The $0^{th}$ layer contains a single source vertex $s$, and the $n^{th}$ layer contains a single sink vertex $t$.
    \item Edges only exist between adjacent layers, from layer $i-1$ to layer $i$ for $i \in [n]$.
    \item Edges between layer $i-1$ and $i$ are labeled with univariate polynomials in the variable $x_{\sigma(i)}$.
\end{itemize}
The polynomial computed by the ROABP is the sum of products of edge weights over all paths from $s$ to $t$.
\end{definition}

\begin{definition}[Width of an ROABP]
The \emph{width} of an ROABP is the maximum number of vertices in any of its layers. For a polynomial $f$ and an order $\sigma$, the ROABP-width, $\rowidth_\sigma(f)$, is the width of the minimal-width ROABP for $f$ in order $\sigma$. Then $\rowidth(f)$ is defined as $\min_{\sigma \in S_n} \rowidth_\sigma(f)$.
\end{definition}

Nisan's work \cite{Nis91} provides an exact algebraic characterization of ROABP width, as the rank of a certain matrix of coefficients.

\begin{definition}[Nisan Matrix]
For a polynomial $f(x_1,\ldots,x_n) \in \F[x_1,\ldots,x_n]$ and a set of variables $X_T = \{x_i\}_{i \in T}$, the \emph{Nisan Matrix} $M_T(f)$ is a matrix whose rows are indexed by monomials in variables from $X_T$ and columns by monomials in variables from $\{x_1,\ldots,x_n\} \setminus X_T$. The $(m_1, m_2)$-th entry is the coefficient of $m_1 \cdot m_2$ in $f$.
\end{definition}

\begin{theorem}[Nisan's Characterization \cite{Nis91}]
\label{thm:nisan}
For any polynomial $f\in \mathbb{F}[x_1,\ldots,x_n]$ and order $\sigma$, the number of vertices in the $i$th layer of an optimal ROABP for $f$ in order $\sigma$ is exactly $\mathrm{rank}(M_{T_i}(f))$, where $T_i = \{\sigma(1), \dots, \sigma(i)\}$.
\end{theorem}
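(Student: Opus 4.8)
The plan is to establish, for each layer index $i$, matching lower and upper bounds on the number of vertices in that layer, both equal to $r_i := \mathrm{rank}(M_{T_i}(f))$. First I would reduce to the case where $\sigma$ is the identity by relabelling the variables, so that $T_i = \{1, \dots, i\}$; write $\bar T_i = \{i+1, \dots, n\}$ and let $X_{T_i}, X_{\bar T_i}$ be the corresponding sets of variables. The two bounds are the two inequalities needed to pin the optimal layer size to $r_i$, and proving both for every $i$ simultaneously yields the theorem, since a single program will attain every bound at once.

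For the lower bound I would fix an arbitrary ROABP for $f$ in this order and let $L_i$ denote its $i$th layer. To each vertex $v \in L_i$ I would attach its prefix polynomial $p_v \in \F[X_{T_i}]$ (the sum over source-to-$v$ paths of the product of edge labels) and its suffix polynomial $q_v \in \F[X_{\bar T_i}]$ (the analogue from $v$ to the sink). Splitting every source-to-sink path at layer $i$ gives the decomposition $f = \sum_{v \in L_i} p_v\, q_v$. Comparing the coefficient of $m_1 m_2$ on both sides, where $m_1$ ranges over monomials in $X_{T_i}$ and $m_2$ over monomials in $X_{\bar T_i}$, I would read off the factorization $M_{T_i}(f) = P Q^{\top}$, where the $v$-th column of $P$ is the coefficient vector of $p_v$ and the $v$-th column of $Q$ is that of $q_v$. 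Subadditivity of rank then gives $r_i = \mathrm{rank}(M_{T_i}(f)) \le \mathrm{rank}(P) \le |L_i|$, so every ROABP has at least $r_i$ vertices in layer $i$.

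For the upper bound I would build the canonical ROABP directly from the Nisan matrices. For a monomial $m$ over $X_{T_i}$, let $\partial_m f \in \F[X_{\bar T_i}]$ be the polynomial whose coefficient of $m'$ is the coefficient of $m\,m'$ in $f$; the coefficient vectors of these polynomials are exactly the rows of $M_{T_i}(f)$, so the space $W_i := \mathrm{span}\{\partial_m f\}$ has dimension $r_i$. I would let the vertices of layer $i$ index a basis $b^{(i)}_1, \dots, b^{(i)}_{r_i}$ of $W_i$, intending $b^{(i)}_v$ to be the suffix polynomial $q_v$; the boundary layers are consistent because $W_0 = \mathrm{span}\{f\}$ and $W_n = \mathrm{span}\{1\}$ are one-dimensional for $f \ne 0$. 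To label the edges into layer $i$, I would set $y := x_i$ and expand each $b^{(i-1)}_u = \sum_k y^k c_{u,k}$ with $c_{u,k} \in \F[X_{\bar T_i}]$; the crucial observation is that every $c_{u,k}$ lies in $W_i$, being a linear combination of polynomials $\partial_{m y^k} f$. Expressing $c_{u,k} = \sum_v \alpha^{(k)}_{u,v}\, b^{(i)}_v$ in the basis and defining the edge label $e_{u,v}(y) := \sum_k \alpha^{(k)}_{u,v} y^k$ would reproduce the suffix recurrence $b^{(i-1)}_u = \sum_v e_{u,v}(y)\, b^{(i)}_v$; a downward induction from the sink then certifies that the source computes $b^{(0)} = f$, giving an ROABP with exactly $r_i$ vertices in each layer.

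The hard part will be the upper bound, specifically verifying that the per-layer bases glue together consistently: I must check that coefficient extraction in $x_i$ maps $W_{i-1}$ into $W_i$ (so that the $e_{u,v}$ are genuine univariate polynomials in the single correct variable), and that telescoping the recurrence across all $n$ layers reproduces $f$ exactly, rather than merely some polynomial sharing its Nisan ranks. The lower bound, in contrast, will follow immediately from the rank factorization $M_{T_i}(f) = P Q^{\top}$.
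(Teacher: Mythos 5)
Your proposal is correct, but note that the paper offers no proof of this statement at all --- it is imported as a black box from Nisan's original work \cite{Nis91}, so there is no in-paper argument to compare against. What you have written is the standard proof of Nisan's characterization: the lower bound via splitting paths at layer $i$ to obtain the rank factorization $M_{T_i}(f)=PQ^{\top}$, and the upper bound via the canonical construction from the coefficient-extraction spaces $W_i=\mathrm{span}\{\partial_m f\}$, including the two points you correctly flag as the delicate ones (that extracting $x_i$-coefficients maps $W_{i-1}$ into $W_i$, which makes the edge labels genuine univariates in the single correct variable, and that the downward induction from the sink recovers $f$ itself). Your closing remark that one ROABP attains all the layer-wise lower bounds simultaneously is also the right way to justify the word ``optimal'' in the statement; the only implicit hypothesis worth making explicit is $f\neq 0$, which you already noted when checking that $W_0$ and $W_n$ are one-dimensional.
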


Next, we define two graph layout problems. They will be central to our reductions.

\subsection{Graph Layout Problems}

The authors of \cite{BDGT24} prove that DenseROwidth-d is NP hard (for $d \geq 6$) via a reduction from a particular NP-hard graph layout problem called Cutwidth. We define this problem next.

\begin{definition}[Cutwidth \cite{DPS02}]\label{def:cutwidth}
Given a graph $G=([n],E)$, a \emph{linear arrangement} is a permutation $\pi: [n] \to [n]$. The \emph{cutwidth} of $G$ with respect to $\pi$ is $\max_{i \in [n-1]} |\cut(\{\pi(1), \ldots, \pi(i)\})|$\footnote{For a graph $G=([n], E)$ and a subset $S\subseteq[n]$ of vertices, $\cut(S)$ denotes the number of edges with one endpoint in $S$ and the other in $[n]\setminus S$}. The \emph{Cutwidth} of $G$ is the minimum cutwidth over all arrangements.
\end{definition}

It is known \cite{DPS02} that the following problem is NP hard, even for graphs with maximum degree $\leq 3$:

\begin{problem}[\textrm{CutWidth} \cite{DPS02}]
Given a graph $G=([n], E)$ with maximum degree $\leq 3$ and an integer $w\in \mathbb{N}$ in binary, decide whether the cutwidfth of $G$ is at most $w$.
\end{problem}

Next we define a similar looking graph layout problem, called Linear Rank-Width. Here, the cut size is replaced by the rank of a certain matrix.

\begin{definition}[Linear Rank-Width$_{\mathbb{F}}$ \cite{Oum17}]
Let $G=([n],E)$ be a graph and let $\pi: [n] \to [n]$ be a linear arrangement. For $i \in [n-1]$, let $A_i$ be a matrix over $\mathbb{F}$ with rows indexed by vertices $\{v \mid \pi(v) \le i\}$ and columns by vertices $\{v \mid \pi(v) > i\}$. The entry $(u,v)$ is 1 if $\{u,v\} \in E$ and 0 otherwise. The \emph{linear rank-width} of $G$ with respect to $\pi$ is $\max_{i \in [n-1]} \mathrm{rank}(A_i)$. The linear rank-width of $G$ is the minimum linear rank-width over all arrangements.
\end{definition}

It is known \cite{Oum17} that the the analogous decision problem of minimizing linear rank-width is also NP-hard, over any field.

\begin{problem}[Linear Rank-Width$_{\mathbb{F}}$ \cite{DPS02}]\label{problem:linrankwidth}
Given a graph $G=([n], E)$ and an integer $w\in \mathbb{N}$ in binary, decide whether the linear rank-width of $G$ over $\mathbb{F}$ is at most $w$.
\end{problem}

\section{Inapproximability of ROABP Order-Finding}

In this section, we show the NP-hardness of approximating ROABP width up to an arbitrary constant factor, when the input is a circuit. We will require the following Lemma of Bhargava et al.\  

\begin{lemma}[\cite{BDGT24}]\label{lem:tensor}
    Given a polynomial $f(x_1, \ldots,x_n)\in\mathbb{F}[x_1, \ldots, x_n]$ with individual degree $d$ and an $l\in \mathbb{N}$, define $$f^{\otimes l}(x_1, \ldots, x_n) = \prod_{k=0}^{l-1}f\left(x_1^{(d+1)^k}, \ldots, x_n^{(d+1)^k} \right)$$ For every subset $S\subseteq[n]$, we have $\rank(M_S(f^{\otimes l})) = \rank(M_S(f))^l$. \hfill $\square$
\end{lemma}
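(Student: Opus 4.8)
The plan is to show that the substitution $x_i \mapsto x_i^{(d+1)^k}$ makes the Nisan matrix of $f^{\otimes l}$ literally equal to the $l$-fold Kronecker (tensor) power of $M_S(f)$, after a suitable reindexing of rows and columns. Once that structural identity is in place, the claim follows immediately from the standard multiplicativity of rank under Kronecker products, namely $\rank(A\otimes B)=\rank(A)\,\rank(B)$, applied $l-1$ times.

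The first and central step is a base-$(d+1)$ encoding argument, and this is exactly where the individual-degree hypothesis is used. Since $f$ has individual degree $d$, every variable exponent occurring in a monomial of $f$ lies in $\{0,1,\ldots,d\}$. In the $k$-th factor $f(x_1^{(d+1)^k},\ldots,x_n^{(d+1)^k})$, a monomial $\prod_i x_i^{e_i}$ of $f$ becomes $\prod_i x_i^{e_i(d+1)^k}$. Multiplying one monomial from each of the $l$ factors, the exponent of $x_i$ in the product is $\sum_{k=0}^{l-1} e_i^{(k)}(d+1)^k$ with each digit $e_i^{(k)}\in\{0,\ldots,d\}$. By uniqueness of base-$(d+1)$ representation, the resulting map from $l$-tuples of $f$-monomials to monomials of $f^{\otimes l}$ is a bijection; in particular it is injective, so the coefficient of each monomial of $f^{\otimes l}$ is exactly the product of the corresponding $l$ coefficients of $f$, with no collisions that could create cancellation or addition. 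I would record this as the key combinatorial fact.

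Second, I would verify that this encoding is compatible with the partition of variables into $X_S$ and its complement $X_{\bar S}$. The digit decomposition acts independently on each variable, so splitting a full monomial of $f^{\otimes l}$ into its $X_S$-part $P$ and its $X_{\bar S}$-part $Q$ corresponds precisely to splitting each $f$-factor $m_k$ into its $X_S$-part $p_k$ and its $X_{\bar S}$-part $q_k$, so that $m_k=p_k q_k$. Consequently the row monomials $P$ of $M_S(f^{\otimes l})$ biject with $l$-tuples $(p_0,\ldots,p_{l-1})$ of row monomials of $M_S(f)$, the column monomials biject with $l$-tuples of column monomials, and using the coefficient-multiplication fact from the previous step,
\[
\bigl[M_S(f^{\otimes l})\bigr]_{P,Q} = \text{coeff of } PQ \text{ in } f^{\otimes l} = \prod_{k=0}^{l-1} \text{coeff of } p_k q_k \text{ in } f = \prod_{k=0}^{l-1} \bigl[M_S(f)\bigr]_{p_k,q_k}.
\]
The right-hand side is exactly the $\bigl((p_0,\ldots,p_{l-1}),(q_0,\ldots,q_{l-1})\bigr)$ entry of the Kronecker power $M_S(f)^{\otimes l}$, so the two matrices coincide up to the reindexing established above.

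Finally, I would invoke $\rank\!\bigl(M_S(f)^{\otimes l}\bigr)=\rank(M_S(f))^l$ to conclude. The only genuine obstacle is the first step: one must use the individual-degree bound carefully to ensure the base-$(d+1)$ digits never overflow, which is precisely what forces the tuple-to-monomial map to be a bijection and the coefficients to multiply cleanly. Everything after that is bookkeeping tracking how the $X_S$/$X_{\bar S}$ split threads through the encoding, followed by the elementary rank identity for Kronecker products.
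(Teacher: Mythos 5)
Your proof is correct and is essentially the argument behind the cited result: the paper itself states Lemma~\ref{lem:tensor} without proof, deferring to \cite{BDGT24}, where the proof is exactly this base-$(d+1)$ encoding showing $M_S(f^{\otimes l})$ equals the Kronecker power $M_S(f)^{\otimes l}$ up to reindexing, followed by rank multiplicativity of Kronecker products. The only point worth making explicit is that row/column monomials of $M_S(f^{\otimes l})$ with some exponent exceeding $(d+1)^l-1$ index identically zero rows/columns, so discarding them (as your bijection implicitly does) leaves the rank unchanged.
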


\begin{theorem}[Inapproximability of Search-CktROwidth]
\label{thm:inapproximability}
    Let $\alpha\in \mathbb{R}$ be an arbitrary constant. Let $n,d\in \mathbb{N}$ be given as input, in unary. Let $f\in \mathbb{F}[x_1, \ldots, x_n]$ of degree $d$ be given as input, as an arithmetic circuit. Let $w\in\mathbb{N}$ also be give as input, in binary. It is NP-hard to distinguish between the following two cases:
    \begin{enumerate}
        \item $f$ has an ROABP of width $\leq w$ in some order.
        \item Every ROABP for $f$ has width $>\alpha\cdot w$.
    \end{enumerate}
\end{theorem}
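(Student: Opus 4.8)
The plan is to reduce from \textrm{CutWidth} on graphs of maximum degree $\le 3$, exploiting an \emph{exponential} relationship between cut size and Nisan rank so that the trivial additive gap of the CutWidth decision problem becomes a \emph{multiplicative} gap in ROABP width. Concretely, from a graph $G=([n],E)$ I would build a polynomial $f_G$ on variables $x_1,\dots,x_n$ (one per vertex) whose Nisan matrices satisfy $\rank(M_S(f_G))=2^{|\cut(S)|}$ for every $S\subseteq[n]$. Given this, Nisan's characterization (Theorem~\ref{thm:nisan}) shows that for any order $\sigma$ with prefixes $S_i=\{\sigma(1),\dots,\sigma(i)\}$ the optimal width is $\max_i 2^{|\cut(S_i)|}=2^{\cutwidth_\sigma(G)}$, whence $\rowidth(f_G)=2^{\mathrm{Cutwidth}(G)}$. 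Since deciding $\mathrm{Cutwidth}(G)\le c$ versus $\ge c+1$ is NP-hard, this already makes it hard to distinguish $\rowidth(f_G)\le 2^{c}$ from $\rowidth(f_G)>2\cdot 2^{c}$, i.e.\ factor-$2$ inapproximability (note $w=2^{c}$ may be exponential in $n$, which is exactly why $w$ is given in binary).

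For the construction, the naive choice $f_G=\prod_{\{u,v\}\in E}(x_u+x_v)$ fails: a shared vertex $v$ lies in up to three factors, but $\deg_{x_v}(f_G)\le\deg_G(v)$, so the rows of $M_S$ indexed by powers of $x_v$ cannot separate all $2^{|\cut(S)|}$ edge-subsets. To repair this I would use, for each edge $e=\{u,v\}$, the gadget $1+x_u^{a_{e,u}}x_v^{a_{e,v}}$, where for each vertex $v$ the exponents $a_{e,v}$ over the (at most three) edges incident to $v$ are chosen to be distinct powers of two. Setting $f_G=\prod_{e\in E}\bigl(1+x_u^{a_{e,u}}x_v^{a_{e,v}}\bigr)$ yields a polynomial computed by a small (depth-three) circuit, of individual degree at most $7$.

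The key lemma is $\rank(M_S(f_G))=2^{|\cut(S)|}$, and the exponent choice is what makes it work. Expanding the product indexes the monomials of $f_G$ by subsets $E'\subseteq E$; because the powers of two make the exponent of each $x_v$ record exactly which incident edges lie in $E'$, the map $E'\mapsto(\text{monomial})$ is injective, so $M_S(f_G)$ is a $0/1$ matrix. Writing $E=E_S\sqcup E_{\bar S}\sqcup C$ for the edges inside $S$, inside $\bar S$, and the cut $C=\cut(S)$, a nonzero row monomial determines the $E'$-edges inside $S$ together with the $S$-endpoints of the cut-edges of $E'$, and symmetrically for columns. For the lower bound I restrict to the $2^{|C|}$ rows $\prod_{e\in D}x_{u_e}^{a_{e,u_e}}$ and $2^{|C|}$ columns $\prod_{e\in D'}x_{v_e}^{a_{e,v_e}}$ with $D,D'\subseteq C$: since a crossing edge contributes to both sides, the entry is nonzero iff $D=D'$, giving an identity submatrix of order $2^{|C|}$. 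For the matching upper bound I note that two rows with the same cut-pattern $D$ are identical as vectors (the nonzero columns and their $1$-entries depend only on $D$), so there are at most $2^{|C|}$ distinct rows. Hence $\rank(M_S(f_G))=2^{|C|}=2^{|\cut(S)|}$.

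Finally, to amplify factor $2$ to an arbitrary constant $\alpha$ I would apply the tensoring operation of Lemma~\ref{lem:tensor}. Since $\rank(M_S(f_G^{\otimes l}))=\rank(M_S(f_G))^l=2^{l|\cut(S)|}$, the same Nisan computation gives $\rowidth(f_G^{\otimes l})=2^{l\cdot\mathrm{Cutwidth}(G)}$, so an additive gap of one in cutwidth becomes a multiplicative factor $2^{l}$ in width. Taking $l=\lfloor\log_2\alpha\rfloor+1$ (a constant) and $w=2^{lc}$ completes the reduction, with $f_G^{\otimes l}$ still of polynomial circuit size and polynomial degree. The main obstacle is the \emph{exact} identity $\rank(M_S(f_G))=2^{|\cut(S)|}$ holding for \emph{every} cut simultaneously: achieving equality rather than a one-sided bound is precisely what forces the per-vertex distinct-power gadget, and verifying both the identity-submatrix lower bound and the cut-pattern collapse for the upper bound is the technical heart; the reduction to CutWidth and the amplification then follow mechanically from Nisan's theorem and Lemma~\ref{lem:tensor}.
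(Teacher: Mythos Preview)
Your proposal is correct and follows the same approach as the paper: reduce from \textrm{CutWidth} on degree-$\le 3$ graphs by setting $f_G=\prod_{\{i,j\}\in E}\bigl(1+x_i^{a_{ij}}x_j^{a_{ji}}\bigr)$ so that $\rank(M_S(f_G))=2^{|\cut(S)|}$ for every $S$, and then apply the tensoring of Lemma~\ref{lem:tensor} with $l=\lfloor\log_2\alpha\rfloor+1$ to amplify the additive cutwidth gap into a multiplicative $2^l>\alpha$ gap in width. The only difference is the per-vertex exponent scheme---you take distinct powers of two $\{1,2,4\}$ while the paper uses $n_i(j)\in\{1,2,3\}$---and your choice makes the injectivity of the map $E'\mapsto\text{monomial}$ (and hence the block-diagonal structure of $M_S(f_G)$) a bit more transparent.
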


\begin{proof}
    We reduce from \textrm{CutWidth} for graphs with maximum degree $\leq 3$. Given a graph $G = ([n], E)$, we first construct a small $\Pi\Sigma\Pi$ circuit computing a polynomial $f_G(x_1, \ldots, x_n)$ such that for any $S\subseteq[n]$, $\rank(M_S(f_G)) = 2^{|\cut(S)|}$. For a vertex $i\in[n]$ and a neighbour $j$ of $i$, let $n_i(j)$ denote the number of neighbours of $i$ less than or equal to $j$.

    Define $$f_G(x_1, \ldots, x_n) = \prod_{\{i,j\}\in E}\left(1+x_i^{n_i(j)}x_j^{n_j(i)}\right) = \sum_{T\subseteq E}\left(\prod_{\{i,j\}\in T}x_i^{n_{i}(j)}x_j^{n_j(i)}\right)$$

    Let $S\subseteq [n]$ be arbitrary. Define the following sets of edges: $E_1=\{\{i,j\}\mid\{i,j\}\in E \text{ and } i,j \in S\}$, $E_2 = \{\{i,j\}\mid\{i,j\}\in E \text{ and } i,j \in [n]\setminus S\}$ and $E_3 = \cut(S) = E \setminus(E_1\cup E_2)$. Then, the non-zero rows of $M_S(f_G)$ are indexed by monomials $m$ of the following type: $m$ is characterized by a subset $E_1'\subseteq E_1$ and a subset $E_3'\subseteq E_3$ such that  $m=\left(\displaystyle\prod_{\{i,j\}\in E'_1}x_i^{n_{i}(j)}x_j^{n_j(i)}\right)\left(\displaystyle\prod_{\substack{\{i,j\}\in E'_3 \\ i \in S}}x_i^{n_i(j)}\right)$. Call $E_3'$ the subset of cut edges \emph{picked} by such a monomial/row. Similarly, the non-zero columns of $M_S(f_G)$ are indexed by monomials $m'$ which are characterized by a subset $E_2'\subseteq E_2$ and a subset $E_3'\subseteq E_3$ such that we have $m'=\left(\displaystyle\prod_{\{i,j\}\in E'_2}x_i^{n_i(j)}x_j^{n_j(i)}\right)\left(\displaystyle\prod_{\substack{\{i,j\}\in E'_3 \\ j \in [n]\setminus S}}x_j^{n_j(i)}\right)$.

    Observe the following: 
    \begin{itemize}
        \item For a row indexed by $m$ and column indexed by $m'$, if the subset of cut edges picked by $m$ and $m'$ are not identical, $M_S(f_G)(m,m') = 0$.
        \item The submatrix induced by row and column monomials that pick the same subset of cut edges has rank $1$, since every row in this matrix is all $1$'s.
    \end{itemize}

    Therefore $M_S(f_G)$ is a block diagonal matrix with $2^{|\cut(S)|}$ disjoint rank $1$ blocks and so $\rank(M_S(f_G)) = 2^{|\cut(S)|}$. Next, set $l=\lceil\log \alpha\rceil+1$ and consider the polynomial $f^{\otimes l}_G$. It has degree $\leq (6|E|+1)^{\lceil\log \alpha\rceil+1}-1$, it has a formula of size $O_{\alpha}(|E|)$ which can be computed from $G$ in polynomial time, and by Lemma \ref{lem:tensor}, for every $S\subseteq[n]$ it satisfies $\rank(M_S(f_G^{\otimes l}))=2^{l\cdot|\cut(S)|}$. Therefore, it holds that \begin{itemize}
        \item If $G$ has cutwidth $\leq k$, then $\rowidth(f_G^{\otimes l})\leq 2^{l\cdot k}$
        \item If $G$ has cutwidth $\geq k+1$, then $\rowidth(f_G^{\otimes l})\geq 2^{l}\cdot2^{l\cdot k} > \alpha 2^{l\cdot k}$.
    \end{itemize}
    This finishes the proof of Theorem \ref{thm:inapproximability}.
\end{proof}

\section{Hardness of Equivalence Testing for ROABPs}
 Bhargava et al.\ \cite{BDGT24} demonstrate $\NP$-hardness of order-finding problem for ROABPs via a reduction from the cutwidth problem for graphs. More precisely, given a graph $G=([n], E)$ with maximum degree $\Delta$, the authors of \cite{BDGT24} construct the polynomial $f_G\in\mathbb{F}[x_1,\ldots, x_n]$ defined as 
\begin{equation}
\label{eq:hardness}
f_G = \sum_{\{i,j\}\in E} x_{i}^{n_i(j)}x_j^{n_j(i)}+\sum_{i=1}^{n}x_i^{\Delta+1}
\end{equation} 
where $n_i(j)\in[\Delta]$ is the number of neighbours of $i$ less than or equal to $j$. The claim (\cite{BDGT24}, Claim $4.4$) central to their reduction is that for every $S\subseteq[n]$, $\rank(M_S(f_G)) = |\cut(S)|+2$. This implies that the  ROABP width of the polynomial $f_G$ is exactly two more the cutwidth of the graph $G$ and the reduction is {\em order preserving}, i.e., an optimal  arrangement of the vertices in $G$ is exactly an optimal order for an ROABP computing $f_G$.

In this section, we prove that over all fields, testing equivalence to width $w$ ROABPs is $\NP$-hard. Over fields of characteristic $0$, we show that this problem is $\NP$-hard even when the input polynomial is \emph{homogeneous}, whereas in positive characteristic, we require inhomogeneity. Our reductions build on the reductions in Bhargava et al.\ \cite{BDGT24} and Baraskar et al.\ \cite{BDSS24}. In particular, given a graph $G=([n], E)$, we construct a polynomial $f_G \in \mathbb{F}[x_1,\ldots ,x_n
]$ with the following two properties:
\begin{enumerate}
    \item\label{item1} If $A\in\GL_n(\mathbb{F})$ is a permutation matrix times a diagonal matrix, then for every $\mathbf{b}\in\mathbb{F}^n$, the ROABP width of $f(A\bfx+\mathbf{b})$ is $\cutwidth(G)+2$.
    \item\label{item2} If $A\in \GL_n(\mathbb{F})$ is not a permutation matrix times a diagonal matrix, then for every $\mathbf{b}\in\mathbb{F}^n$, the ROABP width of $f(A\mathbf{x}+\mathbf{b})$ is at least $|E| + 3$ in \emph{every} order.
\end{enumerate}

Property \ref{item2} essentially forces the width minimizing $A$ to have a nice form, namely it is a permutation matrix times a diagonal matrix. This is because $\cutwidth(G)\leq |E|$. With this overall plan, we proceed with the details of the reduction.

\subsection{Characteristic Zero}

Over characteristic $0$, we construct a \emph{homogeneous} $f_G$. In order to show property $\ref{item2}$ for a homogeneous $f_G$, we need the following Lemma from \cite{BDSS24} that provides a lower bound on the sparsity of a polynomial divisible by a high power of a linear form with support at least $2$.

\begin{lemma}[\cite{BDSS24}]\label{lem:support}
    Let $\mathbb{F}$ be a field of characteristic $0$. Let $l\in\mathbb{F}[x_1, \ldots, x_n]$ be a linear polynomial with support $\geq 2$, let $h\in\mathbb{F}[x_1, \ldots, x_n]$ be arbitrary and let $d\in\mathbb{N}$. Then $l^d\times h$ has at least $d+1$ non-zero monomials.  
\end{lemma}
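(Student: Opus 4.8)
The plan is to peel the statement down, through a short chain of reductions that can only preserve or decrease the monomial count, to a single-variable fact about sparse polynomials with a nonzero multiple root; that last fact is where characteristic $0$ is genuinely needed.

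First I would reduce to two variables. Since the support of $l$ is at least $2$, I may assume $x_1$ and $x_2$ both occur, so $l = \alpha x_1 + \beta x_2 + c'$ with $\alpha,\beta \neq 0$ and $c'$ free of $x_1,x_2$. Working over $K = \mathbb{F}(x_3,\ldots,x_n)$, I regard $l, h \in K[x_1,x_2]$ and $c' \in K$. Grouping the monomials of $l^d h \in \mathbb{F}[x_1,\ldots,x_n]$ by their $(x_1,x_2)$-exponent, monomials with distinct $(x_1,x_2)$-parts can never cancel, so the number of monomials of $l^d h$ over $\mathbb{F}$ is at least the number of its monomials as an element of $K[x_1,x_2]$. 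It therefore suffices to prove the bound in $K[x_1,x_2]$.

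Next I would homogenize away the lower-order terms. Let $H$ be the top-degree homogeneous component of $h$ in $(x_1,x_2)$; it is nonzero, and since $K[x_1,x_2]$ is a domain the top-degree component of $l^d h$ is exactly $(\alpha x_1 + \beta x_2)^d H$, again nonzero. Its monomials all share one total degree, so they form a subset of the monomials of $l^d h$; hence it suffices to show the binary form $(\alpha x_1 + \beta x_2)^d H$ has at least $d+1$ monomials. Dehomogenizing via $x_2 = 1$, $t = x_1$, this count equals the number of nonzero coefficients of $G(t) := (\alpha t + \beta)^d\, H(t,1)$. As $H$ is a nonzero form, $H(t,1) \neq 0$, so $G \neq 0$; and $G$ is divisible by $(t-\rho)^d$ with $\rho = -\beta/\alpha$, where crucially $\rho \neq 0$ because $\beta \neq 0$ --- this is precisely where the support hypothesis is spent.

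The heart of the argument, and the step I expect to be the real obstacle, is the resulting one-variable claim: over a field of characteristic $0$, a nonzero polynomial divisible by $(t-\rho)^d$ with $\rho \neq 0$ has at least $d+1$ nonzero terms. To prove it, write $G = \sum_{k=1}^{s} c_k t^{a_k}$ with $c_k \neq 0$ and distinct exponents $a_1 < \cdots < a_s$, where $s$ is the number of terms. Divisibility gives $G^{(i)}(\rho) = 0$ for $0 \le i < d$; applying the Euler operator $\theta = t\,\tfrac{d}{dt}$, which acts diagonally as $\theta^i(t^a) = a^i t^a$ and is a unit-upper-triangular combination of the $t^j \partial_t^{\,j}$, I get $(\theta^i G)(\rho) = \sum_{k=1}^{s} c_k a_k^{\,i}\rho^{a_k} = 0$ for $0 \le i < d$. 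Setting $b_k = c_k \rho^{a_k} \neq 0$, this reads $\sum_k a_k^{\,i} b_k = 0$ for $i = 0,\ldots,d-1$. If $s \le d$, the first $s$ of these equations form an $s \times s$ Vandermonde system in the nodes $a_1,\ldots,a_s$, which are distinct integers and hence distinct elements of $K$ because $\operatorname{char}\mathbb{F} = 0$; its determinant $\prod_{k<l}(a_l - a_k)$ is nonzero, forcing all $b_k = 0$, a contradiction. Thus $s \ge d+1$, completing the chain. Characteristic $0$ is used twice here --- once so that multiplicity is detected by ordinary derivatives, and once so that distinct integer exponents remain distinct as field elements --- and I would note that a Wronskian / generalized-Vandermonde computation gives an equivalent route to the same one-variable fact.
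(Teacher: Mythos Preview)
The paper does not supply its own proof of this lemma: it is quoted verbatim from \cite{BDSS24} and then invoked as a black box inside the proof of Lemma~\ref{lem:rank}. So there is no in-paper argument to compare against. Your proposal is a self-contained proof, and it is correct (modulo the harmless omission that $h$ must be nonzero, which the paper's statement also glosses over).

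It is worth noting that your first two reduction steps --- passing to $K=\mathbb{F}(x_3,\ldots,x_n)$ and then to the top $(x_1,x_2)$-homogeneous component --- are exactly the moves the paper itself performs when it \emph{applies} Lemma~\ref{lem:support} inside the proof of Lemma~\ref{lem:rank}. The paper stops there and cites \cite{BDSS24} for the sparsity bound on $(\alpha x_1+\beta x_2)^d H$; you instead dehomogenize to one variable and finish with the Euler-operator/Vandermonde argument. That final step is standard and clean: the identity $\theta^i=\sum_{j\le i}S(i,j)\,t^j\partial_t^j$ (Stirling numbers) justifies the triangular change of basis from $G^{(j)}(\rho)=0$ to $(\theta^iG)(\rho)=0$, and characteristic~$0$ is used exactly where you say --- once so that a $d$-fold root kills the first $d$ derivatives, and once so that distinct integer exponents stay distinct in $K$. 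The argument goes through without change.
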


In fact, we need a strengthening of this lemma. In the sequel, we strengthen this lemma and show that any polynomial divisible by the $d$th power of a linear form with support at least $2$ has \emph{ROABP width} at least $d+1$, in every order.

\begin{lemma}
\label{lem:rank}
Let $\mathbb{F}$ be a field of characteristic zero. Let $\ell$ be a linear polynomial with support at least $2$ and $h\in \mathbb{F}[x_1, \ldots ,x_n]$ be any non-zero polynomial and $d\in \mathbb{N}$. Let $F= \ell^d \cdot h$. Then for any $\sigma \in S_n$, $\rowidth_{\sigma}(f) \geq d+1$. 
\end{lemma}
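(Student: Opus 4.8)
The plan is to invoke Nisan's characterization (Theorem~\ref{thm:nisan}) to reduce the width lower bound to a rank lower bound, and then feed the sparsity estimate of Lemma~\ref{lem:support} into an explicit full-rank submatrix. Since $\rowidth_\sigma(F)\ge \rank(M_S(F))$ for \emph{every} prefix $S=\{\sigma(1),\dots,\sigma(i)\}$ of the given order $\sigma$, it suffices to produce one prefix $S$ with $\rank(M_S(F))\ge d+1$. Because $\ell$ has support at least $2$, relabel so that $x_1,x_2$ lie in its support and write $\ell=a_1x_1+a_2x_2+\ell'$ with $a_1,a_2\neq 0$ and $\ell'$ a linear form in $x_3,\dots,x_n$. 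Every order $\sigma$ separates $1$ from $2$ at some prefix; I would take that prefix as $S$ and assume without loss of generality $1\in S$ and $2\notin S$, so that the $x_1$-degree of a monomial is recorded on the row side of $M_S(F)$ and its $x_2$-degree on the column side.

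Next I would pass to the top-degree part in $\{x_1,x_2\}$. Viewing $F=\ell^d h$ in $\F[x_3,\dots,x_n][x_1,x_2]$, let $k$ be the largest $(x_1,x_2)$-degree occurring in $h$ and $h_k\neq 0$ the corresponding homogeneous component. Then $F=F_1+F_2$, where $F_1=(a_1x_1+a_2x_2)^d\,h_k$ is homogeneous of $(x_1,x_2)$-degree exactly $d+k$ and every monomial of $F_2$ has $(x_1,x_2)$-degree strictly less than $d+k$. Applying Lemma~\ref{lem:support} to $(a_1x_1+a_2x_2)^d\cdot h_k$ over the field $\F(x_3,\dots,x_n)$, with $x_1,x_2$ as the genuine variables, yields
\[
F_1=\sum_{i+j=d+k}c_{i,j}\,x_1^i x_2^j,\qquad c_{i,j}\in\F[x_3,\dots,x_n],
\]
with at least $d+1$ of the $c_{i,j}$ nonzero; collect those pairs into a set $P$ with $|P|\ge d+1$.

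I would then carve out a triangular submatrix. For each $(i,j)\in P$ fix a leading monomial $m_{i,j}$ of $c_{i,j}$ under a fixed monomial order, and let $m[S]$ denote the monomial $m$ with all variables outside $\{x_i:i\in S\}$ set to $1$. Consider the submatrix of $M_S(F)$ whose rows are indexed by $m_{i,j}[S]\,x_1^i$ and whose columns are indexed by $m_{i,j}[[n]\setminus S]\,x_2^j$, both listed by increasing $i$ (equivalently decreasing $j$). The entry at row-pair $a$ and column-pair $b$ is the coefficient in $F$ of a monomial whose $(x_1,x_2)$-degree equals $d+k+(i_a-i_b)$; this exceeds the maximal degree $d+k$ (hence is forced to be $0$) whenever $a>b$, and equals $d+k$ on the diagonal. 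On the diagonal only $F_1$ can contribute, and the corresponding coefficient is exactly the nonzero leading coefficient of $c_{i,j}$. Thus the submatrix is triangular with nonzero diagonal, so $\rank(M_S(F))\ge|P|\ge d+1$, completing the proof.

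The hard part is the passage from ``many monomials'' to ``large rank''. The prefix $S$ separates $x_1$ from $x_2$ but distributes the auxiliary variables $x_3,\dots,x_n$ between $S$ and $[n]\setminus S$ in an arbitrary, $\sigma$-dependent manner, so a priori the $d+1$ guaranteed terms of $F_1$ could collapse onto coincident rows or columns, or be masked by contributions from the lower tail $F_2$. The two devices that rule this out are the choice of a single leading monomial $m_{i,j}$ per coefficient, which pins each surviving pair to a distinct row and column regardless of how the auxiliary variables split, and the $(x_1,x_2)$-degree bookkeeping, which simultaneously annihilates the strictly-below-diagonal entries and all of $F_2$'s contributions on the diagonal. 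Checking that these give a genuinely triangular pattern for every admissible $S$ is the technical heart of the argument.
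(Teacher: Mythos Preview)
Your proof is correct and follows essentially the same approach as the paper's: isolate the top $(x_1,x_2)$-homogeneous slice $F_1=(a_1x_1+a_2x_2)^d h_k$, apply Lemma~\ref{lem:support} over $\F(x_3,\dots,x_n)$ to get $\ge d+1$ nonzero coefficients $c_{i,j}$, pick their leading monomials, and read off a full-rank $(d+1)\times(d+1)$ submatrix of $M_S(F)$ at the prefix separating $x_1$ from $x_2$. The only cosmetic difference is that the paper orders the columns by increasing $j$ rather than increasing $i$, so it calls the resulting submatrix \emph{anti}-triangular with nonzero anti-diagonal where you call it triangular with nonzero diagonal; the content is identical.
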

\begin{proof}
Without loss of generality, let $x_1, x_2$ be in the support of $\ell$, i.e., $\ell = a_1x_1 + a_2x_2 + \ell'$ such that $a_1, a_2 \neq 0$. Then, $F =\ell^d \cdot h =  (a_1x_1 + a_2x_2 + \ell')^d \cdot h$. We can view $h$ as a polynomial in $\mathbb{F}(x_3, \ldots, x_n)[x_1, x_2]$, i.e., a polynomial in $x_1, x_2$ with coefficients in the field $\mathbb{F}(x_3, \ldots, x_n)$. Let $k$ be the degree of $h$ (over $\mathbb{F}(x_3, \ldots, x_n)$). That is, $k$ is the maximal $i+j$ such that $h$ contains the monomial $c_{i,j}x_1^ix_2^j$ with $c_{i,j}\in \mathbb{F}(x_3, \ldots, x_n)$ and $ c_{i,j}\neq 0$. We denote by $h_k$ the homogeneous component of $h$ of degree $k$. Then, $F = F_1 + F_2$ where $F_1 = (a_1x_1 + a_2x_2)^d \cdot h_k$. Every monomial $m$ in $F_1$ satisfies $\deg_{x_1}(m)+\deg_{x_2}(m)= d+k$. Furthermore, every monomial in $F_2$ satisfies $\deg_{x_1}(m)+\deg_{x_2}(m) < k +d$. Next, by applying Lemma \ref{lem:support} to $(a_1x_1+a_2x_2)^{d}\times h_k$ \emph{over the field $\mathbb{F}(x_3, \ldots, x_n)$},  we get  $(a_1x_1+a_2x_2)^{d}\times h_k  =\sum\limits_{\substack{i,j \\ i+j=d+k}}c_{i,j}x_1^{i}x_2^j$ such that $c_{i,j}\in \mathbb{F}[x_3, \ldots, x_n]$ and at least $d+1$ of the $c_{i,j}$'s non-zero. Let $m_{i,j}$ be the leading monomial of $c_{i,j}$. Observe that:
\begin{itemize}
\item There exists a set $P$ of at least $d+1$ pairs $(i,j)$ with $i+j=d+k$ such that for each pair $(i,j)$ in $P$, the coefficient of the monomial $x_1^ix_2^jm_{i,j}$ in $F_1$ is non-zero. 
\item There is no monomial $m$ in $F$ with non-zero coefficient such that $\deg_{x_1}(m)+\deg_{x_2}(m)>d+k$. 
\end{itemize}
Now, consider any ROABP for $F$, in an arbitrary order $x_{\pi(1)}, \ldots, x_{\pi(n)}$. Let $S$ be a {\em prefix} of $\pi$ that separates $1$ and $2$. Without loss of generality, assume that $1\in S$ and $2\in [n]\setminus S$. Since width of the ROABP is at least $\rank(M_S)(F)$ we now prove that $\rank(M_S(F)) \geq d+1$. 
For a monomial $m\in \mathbb{F}[x_1, \ldots, x_n]$, let $m[S]$ denote the monomial obtained from $m$ by setting variables outside $\{x_i\mid i\in S\}$ to $1$. Consider the submatrix of $M_S(F)$ induced by the row monomials $\{m_{i,j}[S]x_1^i\mid (i,j)\in P\}$ (order the rows by increasing values of $i$) and the column monomials $\{m_{i,j}[[n]\setminus S]x_2^j\mid (i,j) \in P\}$ (order the columns by increasing value of $j$). By our observations above, this is a full rank, square, anti-triangular matrix with non-zero entries on the main anti-diagonal. It has at least $d+1$ rows.  Therefore, the rank of this submatrix is at least $|P|\geq d+1$ and so is the rank of $M_S(F)$.
\end{proof}

\begin{theorem}
\label{thm:orbit_hardness}
The Equivalence to ROABP problem (Problem~\ref{prob:orbit}) is $\NP$-hard over fields of characteristic $0$. $\NP$-hardness holds even when the input polynomial $f$ is homogeneous.
\end{theorem}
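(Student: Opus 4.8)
The plan is to reduce from \textrm{CutWidth} on graphs of maximum degree $\le 3$, following the blueprint laid out before the statement: from a graph $G=([n],E)$ I would build a homogeneous polynomial $f_G$ satisfying Properties~\ref{item1} and~\ref{item2}, and then take the target width to be $w=\cutwidth(G)+2$. Since $\cutwidth(G)\le |E|$, the threshold $|E|+3$ in Property~\ref{item2} lies strictly above $w+2\le |E|+2$, so any $A$ witnessing an ROABP of width $\le w+2$ for $f_G(A\bfx+\bfb)$ must be a permutation times a diagonal matrix; Property~\ref{item1} then pins the width of such an instance to exactly $\cutwidth(G)+2$. Thus $G$ has cutwidth $\le k$ if and only if $f_G$ lies in the orbit of a width-$(k+2)$ ROABP, which is the desired reduction. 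The polynomial $f_G$ will have only $O(|E|+n)$ monomials, each of polynomially bounded degree, so its sparse representation has polynomial size and is computable from $G$ in polynomial time.

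For the construction I would take $f_G=(x_1x_2\cdots x_n)^{D}\cdot g_G$, where $D=|E|+2$ and $g_G$ is a \emph{homogeneous} avatar of the cut gadget from~\eqref{eq:hardness}, engineered so that $\rank(M_S(g_G))=|\cut(S)|+2$ for every $S\subseteq[n]$. Multiplying by the single monomial $(x_1\cdots x_n)^{D}$ relabels the rows and columns of every Nisan matrix bijectively and hence preserves all Nisan ranks, so $f_G$ inherits $\rank(M_S(f_G))=|\cut(S)|+2$; crucially, it also makes $f_G$ divisible by $x_k^{D}$ for \emph{every} coordinate $k$. Making $g_G$ homogeneous while retaining the exact cut-rank count is the first technical point: the edge terms $x_i^{n_i(j)}x_j^{n_j(i)}$ of~\eqref{eq:hardness} have varying total degree, and the padding that equalises them must not collapse the distinctness of the row/column monomials attached to the cut edges (which is what produces the rank-$|\cut(S)|$ block), nor disturb the extra rank $2$ coming from the empty monomial together with the $x_i^{\Delta+1}$ penalties.

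The forcing of niceness (Property~\ref{item2}) is then the clean half of the argument and is exactly where characteristic $0$ enters. If $A\in\GL_n(\F)$ is not a permutation times a diagonal matrix then, by invertibility, some row of $A$ has at least two nonzero entries, so the corresponding coordinate $\ell:=(A\bfx+\bfb)_k$ is an affine form of support $\ge 2$. Writing $f_G=x_k^{D}\cdot h$ gives $f_G(A\bfx+\bfb)=\ell^{D}\cdot h(A\bfx+\bfb)$ with $h(A\bfx+\bfb)\ne 0$, so Lemma~\ref{lem:rank} applies and forces $\rowidth_\sigma(f_G(A\bfx+\bfb))\ge D+1=|E|+3$ in every order $\sigma$ and for every $\bfb$, as required.

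The main obstacle is the lower bound in Property~\ref{item1}: for a \emph{nice} $A$ I must show $\rowidth(f_G(A\bfx+\bfb))\ge \cutwidth(G)+2$ for \emph{every} translation $\bfb$, and a translation can in principle lower the rank of a Nisan matrix by filling in positions that were previously zero. Absorbing the permutation into the choice of order, the map becomes $x_k\mapsto \lambda_k x_k+b_k$, and the top-degree homogeneous part of $f_G(A\bfx+\bfb)$ is $\bigl(\prod_k\lambda_k^{D}\bigr)(x_1\cdots x_n)^{D}\,g_G(\lambda\bfx)$, whose Nisan rank is still $|\cut(S)|+2$ by the monomial-shift and diagonal-scaling invariances. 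To make this lower bound survive the lower-order terms introduced by $\bfb$, I would reuse the mechanism from the proof of Lemma~\ref{lem:rank}: exhibit the rank-$(|\cut(S)|+2)$ witness in the top part as a submatrix that, once its rows and columns are ordered by degree, is anti-triangular with a nonzero anti-diagonal and whose ``zero region'' consists of positions of total degree strictly exceeding $\deg(f_G)$. Such positions have coefficient $0$ in all of $f_G(A\bfx+\bfb)$, not merely in its top part, so the submatrix remains anti-triangular with the same nonzero anti-diagonal and hence full rank, independently of $\bfb$ and of the diagonal scaling. Designing the homogeneous gadget $g_G$ so that this cut witness is genuinely graded-anti-triangular --- i.e.\ so that the cut edges can be read off at strictly increasing degrees matched by strictly decreasing complementary degrees --- is where I expect the real work to lie; the upper bound in Property~\ref{item1} is immediate by taking $\bfb=0$ and the optimal linear arrangement, since then $\rank(M_S(f_G))=|\cut(S)|+2$ yields an ROABP of width exactly $\cutwidth(G)+2$.
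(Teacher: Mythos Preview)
Your high-level plan matches the paper's almost exactly: reduce from \textrm{CutWidth}, take $f_G=(x_1\cdots x_n)^{|E|+2}\cdot g_G$ for a homogeneous edge gadget, and invoke Lemma~\ref{lem:rank} on the factor $\ell^{|E|+2}$ to force any width-minimising $A$ to be a permutation times a diagonal matrix. Two points are worth noting, though.

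First, the ``main obstacle'' you flag is not an obstacle at all, and the paper dispatches it in one line. If $A$ is a permutation times a diagonal matrix then the substitution sends each $x_i$ to a \emph{univariate} affine form $a_i x_{\pi(i)}+b_i$. But the edge labels of an ROABP are univariate polynomials, so replacing each $x_{\pi(i)}$ by $(x_i-b_i)/a_i$ (and back) turns any ROABP for $f_G(A\bfx+\bfb)$ into one for $f_G(\bfx)$ of the \emph{same width}, and vice versa. Hence $\rowidth(f_G(A\bfx+\bfb))=\rowidth(f_G)$ for every $\bfb$ whenever $A$ has this shape, and your entire anti-triangular programme for surviving translations is unnecessary. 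You were right that a translation can change individual Nisan ranks, but it cannot change the ROABP width, which is the quantity you actually care about.

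Second, the paper does not homogenise the polynomial~\eqref{eq:hardness}; instead it picks a total order $e_1<\cdots<e_{|E|}$ on the edges and takes the inner gadget to be $\sum_{i=1}^{|E|} x_{i_1}^{\,i}\,x_{i_2}^{\,2|E|-i+1}$, which is already homogeneous of degree $2|E|+1$. Using the edge's \emph{index} $i$ as the exponent (rather than $n_i(j)$) guarantees that all the row and column monomials witnessing the cut are distinct, so the $|\cut(S)|\times|\cut(S)|$ permutation-matrix block survives without any extra ``padding''. The spare rank $+2$ comes from the $\Pi_S(\bfx)$ row and a $\Pi_S(\bfx)\,x_i^{2|E|+1}$ row, playing the role that the $x_i^{\Delta+1}$ terms played in~\eqref{eq:hardness}. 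This is cleaner than trying to homogenise~\eqref{eq:hardness} while preserving the exact rank count, which is the design problem you were anticipating.
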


\begin{proof}
    We reduce from \textrm{CutWidth}. Let $(G=([n], E), w)$ be an instance of \textrm{CutWidth}. We map it to an instance $(f_G, w+2)$ of \textrm{ROABP-Equivalence}. To this end, introduce a total order $e_1<e_2<\ldots<e_{|E|}$ on the edges of $G$. For $i \in E$, let $e_i=\{i_1,i_2\}$ such that $i_1<i_2$. Let $\mathbf{x} = (x_1, \ldots, x_n)$. For $S\subseteq[n]$, define $\Pi_S(\mathbf{x}) \triangleq \left(\prod_{j\in S}x_j^{|E|+2}\right)$ and define the polynomial $$f_G(\mathbf{x}) \triangleq \Pi_{[n]}(\mathbf{x}) \left(\sum_{i=1}^{|E|}x_{i_1}^{i}x_{i_2}^{2|E|-i+1}\right)$$

Clearly, $f_G$ is a homogeneous polynomial of degree $(n+2)|E|+2n+1$. We now prove both the forward and reverse directions of the reduction.  \\

\noindent \textbf{Forward Direction: If $G$ has cutwidth at most $w$, then there exist $A\in \GL_n(\mathbb{F})$ and $\mathbf{b}\in \mathbb{F}^n$ such that $f_G(A\mathbf{x}+\mathbf{b})$ has an ROABP of width at most $w+2$ in some order. }
    
The proof of the forward direction follows the outline of the corresponding proof in \cite{BDGT24}. In particular, we pick $A=I_n$, the $n\times n$ identity matrix, and $\mathbf{b}=0$. Consider any subset $S\subseteq[n]$ and $f_G$ ca be expressed as 
    
    \begin{align*}
        f_G(\mathbf{x}) = {} &\underbrace{\left(\Pi_S(\mathbf{x})\sum_{\substack{i\in[|E|] \\ i_1, i_2\in S}}x_{i_1}^{i}x_{i_2}^{2|E|-i+1}\right)\Pi_{[n]\setminus S}(\mathbf{x})}_{f_1} \\
        & + \underbrace{\left(\Pi_{[n]\setminus S}(\mathbf{x})\sum_{\substack{i\in[|E|] \\ i_1, i_2\in [n]\setminus S}}x_{i_1}^{i}x_{i_2}^{2|E|-i+1}\right)\Pi_{S}(\mathbf{x})}_{f_2} \\
        & + \underbrace{\left(\sum_{\substack{i\in [|E|] \\ \{i_1, i_2\}\in \cut(S)}}\Pi_{S}(\mathbf{x})\Pi_{[n]\setminus S}(\mathbf{x})x_{i_1}^{i}x_{i_2}^{2|E|-i+1}\right)}_{f_3}
    \end{align*}

Observe that $f_1$ and $f_2$ are non-zero polynomials of the form $g(\bfx_S)\times h(\bfx_{[n]\setminus S})$, where for a subset $S$ of $[n]$, $\bfx_{S}$ are the $x$ variables indexed by $S$. Therefore, we have that $\rank(M_S(f_1))=\rank(M_S(f_2)) = 1$. Finally, notice that $f_3$ can be written as a \emph{sum} of $|\cut(S)|$ monomials. For every monomial $m$, $\rank(M_S(m)) = 1$ for each $S$. Since $M_S(f_G) = M_S(f_1) + M_S(f_2) + M_S(f_3)$, by subadditivity of rank, we have that $\rank(M_S(f_G)) \leq |\cut(S)|+2$. Consider a linear arrangement $\pi:[n]\rightarrow[n]$ that witnesses $\cutwidth(G)\leq w$. Due to the above reasoning, combined with Nisan's characterization, we have that $f_G$ has an ROABP in order $\pi$, of width $\leq w+2$. \\

\noindent\textbf{Reverse Direction: If there exist $A\in \GL_n(\mathbb{F})$ and $\mathbf{b}\in \mathbb{F}^n$ such that $f_G(A\mathbf{x}+\mathbf{b})$ has an ROABP of width at most $w+2$ in some order, then $G$ has cutwidth at most $w$.}

    In order to prove the reverse direction, we need the following key lemma.

    \begin{lemma}\label{lem:width}
        Suppose $A\in\GL_n(\mathbb{F})$ is not the product of a permutation matrix and a diagonal matrix, then for \emph{every} $\mathbf{b}\in\mathbb{F}^n$, every ROABP computing $f_G(A\bfx+\mathbf{b})$ must have width at least $|E|+3$.
    \end{lemma}
    \begin{proof}
        If $A$ is not the product of a permutation matrix and a diagonal matrix, then $A\bfx+\mathbf{b}$ must send at least one $x$ variable to a linear polynomial with support \emph{at least} $2$. We may then write $f_G(A\bfx+\bfb) = l(\bfx)^{|E|+2}\times h$ for a nonzero polynomial $h$. By Lemma \ref{lem:rank}, for any $\sigma \in S_n$, $\rowidth_{\sigma}(f_G(A\bfx+\mathbf{b})) \geq |E|+3$. 
    \end{proof}
We use Lemma \ref{lem:width} to complete the reverse direction of the reduction. First, we observe that if $A$ is the product of a diagonal matrix and a permutation matrix, then $\rowidth_{\sigma}(f_G(A\bfx+b))=\rowidth_{\sigma}(f_G(\bfx))$: Suppose $A$ has this form. Then there exist $a_1, \ldots, a_n\in\mathbb{F}$, all non-zero, and a permutation $\pi:[n]\rightarrow [n]$ such that $A\bfx + \bfb = (a_1x_{\pi(1)}+b_1, a_2x_{\pi(2)}+b_2, \ldots, a_nx_{\pi(n)}+b_n)$. %We note that in this case, $\rowidth(f_G(A\bfx+\bfb) = \rowidth(f_G(\bfx))$, 
We can obtain an ROABP for $f_G(\bfx)$ from an ROABP for $f_G(a_1x_{\pi(1)}+b_1, \ldots, a_nx_{\pi(n)}+b_n)$ by replacing each $x_{\pi(i)}$ with $(x_i-b_i)/a_i$. The resulting ABP is still an ROABP, with the same width as before. Also, this process is clearly reversible.  \\

By the proof of the forward direction, we know that $\rowidth(f_G)\leq \cutwidth(G)+2\leq |E|+2$. 
Now suppose there exist $A\in \GL_n$ and $\bfb\in\mathbb{F}^n$ such that $f_G(A\bfx+\bfb)$ has ROABP width at most $w+2$ in some order. Due to Lemma \ref{lem:width} and the observation above, we may assume that $A$ is the identity matrix and $\bfb=\mathbf{0}$. Next, we show that for each $S\subseteq[n]$, $\rank(M_S(f_G))\geq |\cut(S)|+2$ by a proof similar to the $\NP$ hardness in \cite{BDGT24} by exhibiting a submatrix of $M_S(f_G)$ that is a $|\cut(S)|\times |\cut(S)|$ permutation matrix, along with two rows that lie in disjoint spaces.

 This suffices for the reverse direction, for if $f_G$ has an ROABP of width $\leq w+2$ in order $\pi$, then due to Nisan's characterization (Theorem \ref{thm:nisan}), we would have that $\cutwidth(G)\leq w$, witnessed by the linear arrangement $\pi$. %Similar to the proof of Claim 4.4 in \cite{BDGT24}, we may exhibit 
    
    Define $E_1\triangleq\{e_i\mid e_i\in \cut(S), i_1\in S, i_2\in [n]\setminus S\}$ and $E_2\triangleq \{e_i\mid e_i\in \cut(S), i_1\in [n]\setminus S, i_2\in S\}$. We look at the submatrix of $M_S(f_G)$ induced by the row monomials $R=\{\Pi_S(\bfx)x_{i_1}^{i}\mid e_i\in E_1\}\cup\{\Pi_S(\bfx)x_{i_2}^{2|E|+1-i}\mid e_i\in E_2\}$ and column monomials $C=\{\Pi_{[n]\setminus S}(\bfx)x_{i_1}^{i}\mid e_i\in E_2\}\cup\{\Pi_{[n]\setminus S}(\bfx)x_{i_2}^{2|E|+1}\mid e_i\in E_1\}$. This is a permutation matrix, since each monomoial labeling both the rows an columns can be associated with a unique end point of an edge in $\cut(S)$, and the only non-zero entry (in the entirety of $M_S(f_G)$, not just the submatrix) in that row/column corresponds to the monomial labeling the column/row associated with the other end point of that edge. On the other hand, consider the row labeled by $\Pi_S(\bfx)$. This has non-zero entries in the columns labeled by the monomials $\Pi_{[n]\setminus S}(\bfx)x_i^{2|E|+1}$, for each $i\in [n]\setminus S$ (note that these monomials are \emph{not} contained in $C$), and a zero entry in the column labeled by $x_{[n]\setminus S}$. Also, a row labeled by $\Pi_S(\bfx)x_{i}^{2|E|+1}$ for an $i\in[S]$ (again, note that this monomial is \emph{not} contained in $R$) has a non-zero entry in the column labeled by $\Pi_{[n]\setminus S}$. This gives us that $\rank(M_S(f_G))\geq |\cut(S)|+2$.\end{proof}

\subsection{Characteristic $p$}

In this section, we prove hardness of testing equivalence to width $w$ ROABPs, over characteristic $p$. In this setting, we resort to inhomogeneity to prove hardness. This is because Lemma \ref{lem:support}, and therefore, Lemma \ref{lem:width} fail to hold over small characteristic. In particular, we show that a skewed version of the polynomial in Equation (\ref{eq:hardness}) gives us hardness, even over characteristic $p$. Recall the polynomial $f_G = \sum_{\{i,j\}\in E} x_{i}^{n_i(j)}x_j^{n_j(i)}+\sum_{i=1}^{n}x_i^{\Delta+1}$ constructed in \cite{BDGT24}. Instead of the term $\sum_{i=1}^{n}x_i^{\Delta+1}$, we introduce the asymmetric $\sum_{i=1}^{n}x_i^{D_j}$. We carefully choose distinct exponents $D_j$ such that they are polynomial in the size of the graph while also allowing us to prove NP-hardness.

We will use the following well known result of Lucas.

\begin{theorem}\label{thm:lucas}
    Let $p$ be a prime and $m,n$ be integers such that $m=\sum_{k=0}^{t}m_kp^k$ and $n=\sum_{k=0}^{t}n_kp^k$ are the base $p$ expansions of $m$ and $n$ respectively. Then, $$\binom{m}{n}\equiv\prod_{k=0}^{t}\binom{m_k}{n_k} \mod p$$ where we use the convention that $\binom{m_k}{n_k}=0$ if $m_k<n_k$.
\end{theorem}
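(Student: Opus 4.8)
The plan is to prove Lucas's theorem by a generating-function argument in $\F_p[x]$, the polynomial ring over the field with $p$ elements, exploiting the fact that raising to the $p$-th power is additive there (the Frobenius endomorphism). First I would record the ``freshman's dream'' identity $(1+x)^p \equiv 1 + x^p \pmod{p}$, which is immediate from the binomial theorem since $\binom{p}{k} \equiv 0 \pmod{p}$ for $1 \le k \le p-1$ (the numerator $p!$ carries a factor of $p$ that the denominator $k!(p-k)!$ cannot cancel). Iterating this identity then gives $(1+x)^{p^k} \equiv 1 + x^{p^k} \pmod{p}$ for every $k \ge 0$.

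Next, using the base-$p$ expansion $m = \sum_{k=0}^{t} m_k p^k$, I would factor and reduce:
$$(1+x)^m = \prod_{k=0}^{t}\left((1+x)^{p^k}\right)^{m_k} \equiv \prod_{k=0}^{t}\left(1+x^{p^k}\right)^{m_k} \equiv \prod_{k=0}^{t}\sum_{j_k=0}^{m_k}\binom{m_k}{j_k}x^{j_k p^k} \pmod{p},$$
the last step being the binomial theorem applied to each factor separately.

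Finally I would compare coefficients of $x^n$ on both sides. On the left it is $\binom{m}{n}$ by the binomial theorem. On the right, a term of the expanded product has exponent $\sum_{k=0}^{t} j_k p^k$ with $0 \le j_k \le m_k \le p-1$, so each $j_k$ is a legitimate base-$p$ digit; by uniqueness of the base-$p$ representation, producing the exponent $n = \sum_k n_k p^k$ forces $j_k = n_k$ for all $k$. Hence the coefficient of $x^n$ on the right is exactly $\prod_{k=0}^{t}\binom{m_k}{n_k}$, and equating coefficients over $\F_p$ yields the stated congruence; the convention $\binom{m_k}{n_k}=0$ when $m_k < n_k$ is respected automatically, since in that case $j_k$ never attains the value $n_k$ and the monomial $x^n$ simply does not appear. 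The proof is essentially routine, and the only point demanding care is this coefficient extraction: one must invoke the bound $j_k \le m_k \le p-1$ to guarantee that matching the integer exponents $\sum_k j_k p^k = \sum_k n_k p^k$ is equivalent to the digit-wise equalities $j_k = n_k$, which is precisely the step that converts a single binomial coefficient into the product over digits.
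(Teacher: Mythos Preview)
Your proof is correct; it is the standard generating-function argument for Lucas's theorem via the Frobenius identity $(1+x)^{p}\equiv 1+x^{p}\pmod p$, and the coefficient-extraction step is handled carefully. Note, however, that the paper does not actually prove this statement: it is quoted as ``the following well known result of Lucas'' and used as a black box, so there is no proof in the paper to compare against. Your write-up would serve perfectly well as a self-contained justification were one desired.
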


\begin{theorem}
\label{thm:orbit_hardness_p}
Let $p$ be a prime and let $\mathbb{F}$ be a field of characteristic $p$. The Equivalence to ROABP problem (Problem~\ref{prob:orbit}) is $\NP$-hard over $\mathbb{F}$.
\end{theorem}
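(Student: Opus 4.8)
The plan is to reduce from \textrm{CutWidth} on graphs of maximum degree $\leq 3$, mimicking the characteristic-zero reduction of Theorem~\ref{thm:orbit_hardness} but replacing the common factor $\Pi_{[n]}(\bfx)$ --- whose role was to turn any non-monomial $A$ into a high power of a support-$\geq 2$ linear form --- with a collection of \emph{distinct} pure high-degree powers. Concretely, given $(G=([n],E), w)$ I would output the instance $(f_G, w+2)$, where
$$f_G(\bfx) = \sum_{\{i,j\}\in E} x_i^{n_i(j)} x_j^{n_j(i)} + \sum_{i=1}^{n} x_i^{D_i},$$
and the exponents $D_1, \ldots, D_n$ are distinct, all larger than $2\Delta = 6$, and chosen using Lucas' theorem (Theorem~\ref{thm:lucas}) so that each $D_i$ has many nonzero binomial coefficients modulo $p$. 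Since the number of $k$ with $\binom{D_i}{k}\not\equiv 0 \pmod p$ equals $\prod_t (d_t+1)$ over the base-$p$ digits $d_t$ of $D_i$, I would fix the low-order digits of every $D_i$ to be $p-1$ (forcing at least $|E|+5$ such $k$, of which at least $|E|+3$ are \emph{interior}, i.e.\ satisfy $1\le k \le D_i - 1$) and encode $i$ in the high-order digits to guarantee distinctness. This keeps each $D_i$ polynomial in $|E|$.

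The forward direction is routine and essentially identical to \cite{BDGT24} and to the forward direction of Theorem~\ref{thm:orbit_hardness}: taking $A=I_n$, $\bfb = \mathbf{0}$, one decomposes $M_S(f_G)$ according to edges inside $S$, inside $[n]\setminus S$, and in $\cut(S)$, together with the pure powers $x_i^{D_i}$, to obtain $\rank(M_S(f_G)) \leq |\cut(S)| + 2$ for every $S$; distinctness of the $D_i$ only makes the monomials easier to tell apart and does not affect the count. By Nisan's characterization (Theorem~\ref{thm:nisan}), an arrangement witnessing $\cutwidth(G)\le w$ then yields an ROABP of width $\le w+2$. For the reverse direction I would first argue, exactly as over characteristic $0$, that a transformation $A$ which is a permutation times a diagonal matrix preserves the ROABP width (relabel, scale, and shift each layer), so it suffices to rule out all other $A$ and then handle $A = I_n$; the latter gives $\rank(M_S(f_G)) \ge |\cut(S)| + 2$ and hence $\cutwidth(G) \le w$.

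The crux, replacing the now-invalid Lemma~\ref{lem:support} and Lemma~\ref{lem:width}, is the characteristic-$p$ analogue of Lemma~\ref{lem:width}: if $A\in\GL_n(\F)$ is \emph{not} a permutation times a diagonal matrix, then $\rowidth_\sigma(f_G(A\bfx+\bfb)) \ge |E|+3$ for every order $\sigma$ and every $\bfb$. Since $\cutwidth(G)\le |E|$, a width bound of $w+2 \le |E|+2$ then forces the width-minimizing $A$ to be monomial. Here is the argument I would run. Writing $\ell_i$ for the image of $x_i$ (the $i$-th row of $A$ applied to $\bfx$), non-monomiality means some $\ell_i$ has support $\ge 2$; let $x_{i^\ast}$ be the one of \emph{largest} exponent $D_{i^\ast}$ among such variables, and pick $x_a, x_b \in \mathrm{supp}(\ell_{i^\ast})$. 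I would track, in $f_G(A\bfx+\bfb)$, the coefficient of each pure monomial $x_a^k x_b^{D_{i^\ast}-k}$ with $1\le k \le D_{i^\ast}-1$. Such a monomial has total degree $D_{i^\ast}$ and involves only $x_a, x_b$; because the exponents are \emph{distinct}, no pure power $(\ell_{i'}+b_{i'})^{D_{i'}}$ with $i'\ne i^\ast$ reaches total degree $D_{i^\ast}$ unless $D_{i'} > D_{i^\ast}$, and a higher term that does reach degree $D_{i^\ast}$ either has a support-$1$ image (contributing only to the two \emph{corners} $k\in\{0,D_{i^\ast}\}$) or a support-$\geq 2$ image, which by maximality of $D_{i^\ast}$ cannot satisfy $D_{i'} > D_{i^\ast}$; the transformed edge terms have degree $\le 2\Delta < D_{i^\ast}$ and are irrelevant. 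Hence for interior $k$ the coefficient is exactly $\binom{D_{i^\ast}}{k}\,(A_{i^\ast a})^k (A_{i^\ast b})^{D_{i^\ast}-k}$, nonzero for at least $|E|+3$ values of $k$ by the Lucas-based choice of $D_{i^\ast}$. For any order $\sigma$, choosing a prefix $S$ separating $a$ and $b$, these interior monomials index a diagonal submatrix of $M_S(f_G(A\bfx+\bfb))$ with nonzero entries, so $\rank(M_S)\ge |E|+3$, and Theorem~\ref{thm:nisan} gives the width bound.

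I expect the main obstacle to be precisely the bookkeeping in this last lemma: ensuring that no cancellation or collision corrupts the interior anti-diagonal. The two ideas that make it go through --- using distinct exponents so that total degree alone separates the contributions of different pure powers, and restricting to the \emph{interior} of the anti-diagonal together with the \emph{maximal}-exponent support-$\ge 2$ variable, so that neither corner interference from high-exponent single variables nor same-weight interference from other support-$\ge 2$ images can occur --- are exactly where the characteristic-$p$ construction must differ from the characteristic-$0$ one. Getting the choice of the $D_i$ (via Lucas) to be simultaneously distinct, polynomial in magnitude, and to admit at least $|E|+3$ interior nonzero binomial coefficients is the place where care is needed.
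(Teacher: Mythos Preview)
Your proposal is correct and follows essentially the same approach as the paper: the same polynomial $f_G=\sum_{\{i,j\}\in E}x_i^{n_i(j)}x_j^{n_j(i)}+\sum_i x_i^{D_i}$ with distinct high pure powers, the same Lucas-based choice of exponents (low base-$p$ digits all equal to $p-1$ to force many nonzero binomial coefficients, high digits encoding the index for distinctness), and the same key lemma that a non-monomial $A$ forces width $\ge |E|+3$ by isolating the contribution of the maximal-exponent support-$\ge 2$ image to an anti-triangular block in the Nisan matrix. The paper's concrete choice is $D_j=(p^L-1)+(j-1)p^L$ with $p^L>\max\{|E|+4,7\}$, and it phrases the maximality as ``largest index $j$'' (equivalent to your ``largest $D_{i^\ast}$'' since the $D_j$ are increasing in $j$).
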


\begin{proof}
Let $(G=([n], E), w)$ be an instance of the \textrm{CutWidth} problem with maximum degree of $G$ is $\leq 3$. We map it to an instance $(f_G, w+2)$ of ROABP-Equivalence.

Let $M = \max\{|E|+4,7\}$. Let $L$ be the smallest integer such that $p^L > M$. For each $j \in [n]$, define the exponent
    $$ D_j := (p^L - 1) + (j-1)p^L. $$
    Note that each $D_j$ is $\poly(n, |E|)$. For each vertex $i$, let $n_i(j)\in[\deg(i)]$, as before, be number of neighbours of $i$ less than or equal to $j$. Note that $n_{i}(j)\leq 3$ for each $\{i,j\} \in E$. Let $\bfx = (x_1, \ldots, x_n)$. Define the polynomial
\begin{equation}
\label{eq:charp}
f_G(\bfx)\triangleq\sum_{\{i,j\}\in E}x_{i}^{n_i(j)}x_j^{n_j(i)}+\sum_{j=1}^{n}x_j^{D_j}.
\end{equation}

First, we check that a statement analogous to Claim 4.4 in \cite{BDGT24} continues to hold for the $f_G$ we have defined. The proof in \cite{BDGT24} works for our $f_G$ as well, we include a proof here for completeness.

    \begin{claim}[Analogous to Claim 4.4 in \cite{BDGT24}]\label{claim:rank}
        For each $S\subseteq [n]$, we have $\rank(M_S(f_G)) = |\cut(S)|+2$.
    \end{claim}

    \begin{proof}
The nonzero rows of $M_S(f_G)$ are of at most three types: 
\begin{enumerate}
\item[(1)] Row indexed by the constant monomial $1$; 
\item[(2)] Rows indexed by $x_i^{n_i(j)}$ for $i\in S$ and $j\in [n]\setminus S$;
\item[(3)] Rows indexed by $x_i^{n_i(j)}x_j^{n_j(i)}$ for $i,j\in S$ or by $x_i^{D_i}$ for $i\in S$.
\end{enumerate}
Similarly, columns of $M_S(f_G)$ are of at most three types:
\begin{itemize}
\item[(1)] Column indexed by  the constant monomial $1$;
\item[(2)] Columns indexed by $x_j^{n_j(i)}$ for $j\in [n]\setminus$ and $i\in S$;
\item[(3)] Columns indexed by $x_i^{n_i(j)}x_j^{n_j(i)}$ for $i,j\in [n]\setminus S$ or by $x_j^{Dj}$ for $j\in [n]\setminus S$.
\end{itemize} 
Let $M_{i,j}$ denote the submatrix of $M_S(f_G)$ induced by rows of type $(i)$ and columns of type $(j)$ from the possibilities mentioned above. By construction, $M_{2,2}$ is a $|\cut(S)|\times |\cut(S)|$ permutation matrix. On the other hand, $M_{1,3}$ and $M_{3,1}$ are non-zero row and column matrices respectively (due to the presence of the $\sum x_j^{D_j}$ monomials in $f_G$). All the other submatrices $M_{i,j}$ are zero matrices. Therefore, $\rank(M_S(f_G)) = \rank(M_{2,2})+\rank(M_{1,3})+\rank(M_{3,1}) = |\cut(S)|+2$.
\end{proof}
\textbf{Forward Direction: If $G$ has cutwidth at most $w$, then there exist $A\in \GL_n(\F)$ and $\bfb\in \F^n$ such that $f_G(A\bfx+\bfb)$ has an ROABP of width at most $w+2$ in some order.} 
    
    We pick $A=I_n$ and $\bfb=0$. By Claim \ref{claim:rank}, we have for each $S\subseteq [n]$, that $\rank(M_S(f_G)) = |\cut(S)|+2$. If there is a linear arrangement $\pi$ that witnesses $\cutwidth(G)\leq w$, then by Nisan's characterization (Theorem \ref{thm:nisan}), $f_G$ has an ROABP in order $\pi$ of width $\leq w+2$.

    \textbf{Reverse Direction: If there exist $A\in \GL_n(\F)$ and $\bfb\in \F^n$ such that $f_G(A\bfx+\bfb)$ has an ROABP of width at most $w+2$ in some order, then $G$ has cutwidth at most $w$.}

We first prove the following key lemma which is the analogue of Lemma \ref{lem:width} in the case of characteristic $p$:

    \begin{lemma}\label{lem:width_char_p}
        Suppose $A\in\GL_n(\F)$ is not the product of a permutation matrix and a diagonal matrix. Then for \emph{every} $\bfb\in\F^n$, every ROABP computing $f_G(A\bfx+\bfb)$ must have width at least $|E|+3$.
    \end{lemma}
    \begin{proof}
        If $A$ is not the product of a permutation matrix and a diagonal matrix, there is a row of $A$, say $A_j$, with at least two non-zero entries. Let $j$ be the largest index for which this holds. The linear form $l_j(\bfx) = A_j\bfx + b_j$ has support at least 2. Assume, without loss of generality, that $l_j(\bfx) = a_{jk_1}x_{k_1} + a_{jk_2}x_{k_2} + \dots$ with $k_1 \neq k_2$ and $a_{jk_1}, a_{jk_2} \neq 0$.

        The polynomial $f_G(A\bfx+\bfb)$ contains the term $(l_j(\bfx))^{D_j}$. By our choice of $D_j = (p^L - 1) + (j-1)p^L$, Lucas's Theorem (Theorem \ref{thm:lucas}) guarantees that $\binom{D_j}{i} \not\equiv 0 \pmod p$ for all $1 \le i \le M<D_j$. This ensures that in the expansion of $(l_j(\bfx))^{D_j}$, the coefficient of $x_{k_1}^i x_{k_2}^{D_j-i}$ is non-zero for all $1 \le i \le M-1$. Note that the support of these monomials is at least $2$.
        
        The other terms in $f_G(A\bfx+\bfb)$ have lower total degree or have support at most $1$: For $l > j$, the term $(A_l\bfx+b_l)^{D_l}$ only involves one variable. For $l < j$, the term $(A_l\bfx+b_l)^{D_l}$ has total degree $D_l < D_j$. The edge terms have total degree at most $6 < D_j$. Thus, the analysis is dominated by $(l_j(\bfx))^{D_j}$.

        Now, consider any ROABP for $f_G(A\bfx+\bfb)$ in an arbitrary order $\pi$. Pick a prefix $S$ of $\pi$ that separates $k_1$ and $k_2$. The width of the ROABP is at least $\rank(M_S(f_{G}(A\bfx+\bfb)))$. The submatrix of $M_S$ corresponding to row monomials $\{x_{k_1}^i\}_{i=1}^{M-1}$ and column monomials $\{x_{k_2}^{D_j-i}\}_{i=1}^{M-1}$ is an anti-triangular matrix of size $(M-1) \times (M-1)$ with non-zero entries on its anti-diagonal. Its rank is therefore $M-1 \geq |E|+3$. Since $\pi$ was arbitrary, this holds for any order.
    \end{proof}

    We use Lemma \ref{lem:width_char_p} to prove the reverse direction. Suppose there exist $A\in \GL_n(\F)$ and $\bfb\in\F^n$ such that $f_G(A\bfx+\bfb)$ has ROABP width at most $w+2$. Combining with Lemma \ref{lem:width_char_p} the reasoning provided in the proof of Theorem \ref{thm:orbit_hardness} we may assume that $A$ is the identity matrix and that $\mathbf{b}=\mathbf{0}$. By Claim \ref{claim:rank}, we have
    $$ \cutwidth(G)+2 = \rowidth(f_G(\bfx)) \le w+2, $$
    which implies $\cutwidth(G) \le w$. This completes the soundness argument and the proof of the theorem.
\end{proof}

\section{Hardness of Order-Finding for Quadratic Forms}

The NP hardness reduction for ROABP order finding provided by \cite{BDGT24} embeds the cut sizes of the graph into the ranks of the corresponding Nisan matrices. Instead, we can also embed the \emph{cut-rank} information into the Nisan matrices. Let $G$ be a graph and $S$ be a subset of it's vertices. For a field $\mathbb{F}$, $\cutrank_{\mathbb{F}}(S)$ is defined as the rank over $\mathbb{F}$ of the $0-1$ matrix whose rows are indexed by vertices of $G$ in $S$, columns by vertices not in $S$, and the $(u,v)$-th entry is $1$ iff $\{u,v\}$ is an edge of $G$. This leads us to our next reduction, which gives NP-Hardness of order finding for quadratic forms. This is an improvement over \cite{BDGT24}, who give hardness for degree $6$ polynomials).

\begin{theorem}
\label{thm:quadratic_hardness}
DenseROwidth-2 is NP-hard over any field $\mathbb{F}$.
\end{theorem}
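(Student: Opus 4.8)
The plan is to reduce from Linear Rank-Width$_{\mathbb{F}}$ (Problem~\ref{problem:linrankwidth}), which is $\NP$-hard over every field $\mathbb{F}$ \cite{Oum17}, mirroring the way the degree-$6$ reduction of \cite{BDGT24} reduces from Cutwidth. Given a graph $G=([n],E)$, I would construct the homogeneous quadratic form
\[
f_G(x_1,\ldots,x_n) \;=\; \sum_{\{i,j\}\in E} x_i x_j \;+\; \sum_{j=1}^{n} x_j^2,
\]
and map the instance $(G,w)$ to $(f_G,\,w+2)$. The form $f_G$ is a genuine quadratic form, its dense representation has only $O(n^2)$ coefficients, and it is computable from $G$ in polynomial time, so this is a valid reduction for \textrm{DenseROwidth-2}. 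The role of the diagonal term $\sum_j x_j^2$ is exactly analogous to the role of $\sum_i x_i^{\Delta+1}$ in the Cutwidth reductions of \cite{BDGT24}: it furnishes, for every variable, a ``pure square'' monomial that forces two auxiliary blocks of the Nisan matrix to be nonzero regardless of the cut.

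The heart of the argument is the claim that for every proper nonempty $S\subseteq[n]$,
\[
\rank(M_S(f_G)) \;=\; \cutrank_{\mathbb{F}}(S) + 2.
\]
To prove this I would sort the row monomials of $M_S(f_G)$ by degree into the constant monomial $1$, the degree-$1$ monomials $\{x_i : i\in S\}$, and the degree-$2$ monomials in $X_S$, and likewise for the columns. Because $f_G$ is homogeneous of degree $2$, an entry indexed by $(m_1,m_2)$ is nonzero only when $\deg(m_1)+\deg(m_2)=2$. Checking the cases, the only surviving entries sit in three blocks whose row-index sets and column-index sets are pairwise disjoint: (i) the row $1$ against the degree-$2$ columns, (ii) the degree-$1$ rows against the degree-$1$ columns, and (iii) the degree-$2$ rows against the column $1$. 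Block (ii) is exactly the cut-rank matrix $A_S$ from the definition of linear rank-width, so it has rank $\cutrank_{\mathbb{F}}(S)$. Block (i) contains the entry $1$ in each column indexed by $x_j^2$ with $j\notin S$, and since $S$ is proper it is a nonzero row, hence of rank $1$; symmetrically block (iii) is a nonzero column (via the rows $x_i^2$, $i\in S$) of rank $1$. As the matrix is, up to reordering of rows and columns, block-diagonal across these three pieces, its rank is the sum $1+\cutrank_{\mathbb{F}}(S)+1$.

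Given the claim, Nisan's characterization (Theorem~\ref{thm:nisan}) finishes the reduction: for any order $\sigma$ the width of the optimal ROABP is $\max_{0\le i\le n}\rank(M_{T_i}(f_G))$, and since the two endpoint layers $T_0=\emptyset$ and $T_n=[n]$ contribute rank $1$ each (dominated by the $\ge 2$ coming from any interior cut), we get $\rowidth_\sigma(f_G)=\mathrm{linrankwidth}_\sigma(G)+2$, and hence $\rowidth(f_G)=\mathrm{linrankwidth}_{\mathbb{F}}(G)+2$ after minimizing over $\sigma$. Thus $G$ has linear rank-width at most $w$ over $\mathbb{F}$ if and only if $f_G$ admits an ROABP of width at most $w+2$ in some order, completing the reduction.

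I expect the main obstacle to be the case analysis establishing the exact block-anti-diagonal structure of $M_S(f_G)$ and, in particular, verifying over every field (including characteristic $2$) that the two auxiliary blocks (i) and (iii) are nonzero for all proper nonempty $S$ while every off-block entry vanishes by the degree count; once this is pinned down, additivity of rank across disjoint blocks together with Nisan's theorem makes the rest routine. A secondary point to state carefully is that $\cutrank_{\mathbb{F}}(S)$ must be computed over the same field $\mathbb{F}$ as the linear rank-width instance, so that block (ii) coincides with $A_S$ on the nose and the reduction transfers the $\NP$-hardness field by field.
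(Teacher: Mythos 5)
Your proposal is correct and matches the paper's own proof essentially verbatim: the same reduction from Linear Rank-Width$_{\mathbb{F}}$, the same quadratic form $f_G = \sum_{\{i,j\}\in E} x_i x_j + \sum_j x_j^2$ mapped to $(f_G, w+2)$, and the same block-decomposition of $M_S(f_G)$ by monomial degree, with the degree-$1$ block equal to the cut-rank matrix and the two rank-$1$ blocks supplied by the square terms. If anything, your write-up is slightly more explicit than the paper's about the degree-count argument for vanishing off-block entries and about the endpoint layers in Nisan's characterization.
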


\begin{proof}
    We reduce from the linear rank-width problem over $\mathbb{F}$. Let $(G = ([n], E), w)$ be an instance of the linear rank-width problem. Construct the polynomial $f_G$ over $\mathbb{F}$ defined as $$f_{G} = \sum_{\{i,j\}\in E}x_ix_j + \sum_{i=1}^{n}x_i^2$$

    The proof of NP-hardness follows from the next claim. 

    \begin{claim}
        Let $S\subseteq[n]$ be such that $1\leq|S|\leq n-1$. Then $\rank(M_S(f_G))=\cutrank_{\mathbb{F}}(S)+2$ 
    \end{claim}
    \begin{proof}
        The proof follows by inspecting the structure of the matrix $M_S(f_G)$. Since $f_G$ is a quadratic form, we only need to consider monomials of degree at most 2. We partition the rows and columns of the Nisan matrix $M_S(f_G)$ by the degree of the indexing monomials and obtain the following block structure for $M_S(f_G)$
    \[
    M_S(f_G) \cong
    \begin{pmatrix}
    M_{0,0} & M_{0,1} & M_{0,2} \\
    M_{1,0} & M_{1,1} & M_{1,2} \\
    M_{2,0} & M_{2,1} & M_{2,2}
    \end{pmatrix}
    \]
    Here, $M_{i,j}$ is the submatrix induced by row monomials of degree $i$ and column monomials of degree $j$. First, note that $M_{1,1}$ is exactly the cut-rank matrix for the subset $S$. Also, note that $M_{2,0}$ and $M_{0,2}$ are non-zero (because of the $x_i^2$ terms) column and row matrices respectively, and so they have rank $1$. Finally, note that the rest of the $M_{i,j}$ are all $\mathbf{0}$. Therefore, $\rank(M_S(f_G)) = \rank(M_{0,2})+\rank(M_{2,0})+\rank(M_{1,1}) = \cutrank_{\mathbb{F}}(S)+2$
    \end{proof}
    In particular, this implies that $\rowidth(f_G)$ is linear rank-width of $G$ plus 2.
\end{proof}

% --- COMMENTED OUT SECTION ---

% --- BIBLIOGRAPHY ---
\bibliography{ref}

\end{document}